\documentclass[a4paper]{article}
\usepackage{amsxtra,amssymb,amsmath,enumerate,graphicx,hyperref,amsthm,color}
\usepackage[T1]{fontenc}
\usepackage[utf8]{inputenc}
\usepackage{lmodern}
\usepackage[english]{babel}
\usepackage{csquotes}
\usepackage[active]{srcltx}
\usepackage{t1enc}
\usepackage{aurical}
\usepackage{float}
\usepackage{subcaption}
\usepackage{titlesec}
\usepackage{float}
\usepackage{accents}
\usepackage{tabularx}
\usepackage{algorithm}
\usepackage[noend]{algpseudocode}
\usepackage[margin=1.in]{geometry}
\usepackage{apacite}

\newtheorem{lemma}{Lemma}[section]
\newtheorem{proposition}[lemma]{Proposition}
\newtheorem{theorem}[lemma]{Theorem}

\newtheorem{definition}[lemma]{Definition}

\newtheorem{assumption}[lemma]{Assumption}

\newcommand{\e}{\mathbb{E}}
\newcommand{\E}{\mathbb{E}}

\renewcommand{\P}{\mathbb{P}}
\newcommand{\R}{\mathbb{R}}

\newcommand{\1}{{\mathbf 1}}

\def\be{\begin{eqnarray}}
\def\ee{\end{eqnarray}}
\def\b*{\begin{eqnarray*}}
\def\e*{\end{eqnarray*}}


\def\be{\begin{eqnarray}}
\def\ee{\end{eqnarray}}
\def\beq{\begin{equation}}
\def\eeq{\end{equation}}
\def\b*{\begin{eqnarray*}}
\def\e*{\end{eqnarray*}}
\def\bi{\begin{itemize}}
\def\ei{\end{itemize}}


\def \1{{\bf 1}}


\def\={\;=\;}

\def\argmin{\mbox{\rm arg}\min}







\def \E{\mathbb{E}}

\def \P{\mathbb{P}}

\def \R{\mathbb{R}}

\def\Ac{{\cal A}}

\def\Fc{{\cal F}}

\def\Lc{{\cal L}}

\def\m{\mathrm{m}}

\def\Lc{{\cal L}}

\def\Qc{{\mathcal Q}}

\def\Vc{{\cal V}}



\newtheorem{Remark}{Remark}[part]

\makeatletter \@addtoreset{equation}{section}

\@addtoreset{Definition}{section}

\@addtoreset{Theorem}{section}

\@addtoreset{Proposition}{section}

\@addtoreset{Assumption}{section}

\@addtoreset{Corollary}{section}

\@addtoreset{Lemma}{section}

\@addtoreset{Remark}{section}

\@addtoreset{Example}{section}

\setlength{\parindent}{0cm}

\begin{document}
\title{Variance optimal hedging with application to  Electricity markets}

\author{ Xavier Warin\footnote{EDF R\&D \& FiME, Laboratoire de Finance des March\'es de l'Energie} \thanks{warin@edf.fr} 
}

\maketitle

\begin{abstract}
In this article, we use the mean variance hedging criterion to value contracts   in incomplete markets. Although the problem is well studied in a continuous and even discrete framework, very few works incorporating illiquidity constraints  have been achieved and no algorithm is available in the literature to solve this problem.  We first show that the valuation  problem incorporating  illiquidity constraints  with a mean variance criterion admits a unique solution. Then we develop two  Least Squares Monte Carlo algorithms based on the dynamic programming principle to efficiently value these contracts: in these methods, conditional expectations are classically  calculated by regression using a dynamic programming approach discretizing the control. The first algorithm calculates the optimal value function while the second calculates the optimal cash flow generated along the trajectories.
In a third part, we  give  the  example of the valuation of a load curve contract coming from energy markets. In such contracts,  incompleteness comes from the uncertainty of the customer's load that cannot be hedged and very tight  illiquity constraints are present.  We compare  strategies given by the two algorithms  and  by  some closed formula ignoring constraints and we show that hedging strategies can be very different. A numerical study of the convergence of the algorithms is also given.                                                 
\end{abstract}

\noindent {\bf Keywords:}  Monte-Carlo methods, mean variance hedging, energy, finance 
\vspace{8mm}

\noindent {\bf MSC2010:}  Primary 65C05, 60J60; Secondary 60J85, 60H35.

\vspace{8mm}

\section{Introduction} 
Since the deregulation of the energy market in the 1990, spot and future contract on electricity are available
in many countries. Because electricity cannot be stored easily and because  balance between production and consumption has to be checked at every
moment, electricity prices exhibit spikes when there is a high demand or a shortage in production. These high variations in the price lead to fat
tails of the distribution of the log return of the price and Gaussian models for example fail to
reproduce this feature.  Many models have been developed to reproduce the statistical features of electricity prices  as shown in \citeA{aid2015electricity} leading to incomplete market modeling. However even in the case of a Gaussian model   there are many sources of friction and incompleteness on energy market:
 The first is the liquidity of hedging products available on the market.  Large change in position  leads to change in prices and then the hypothesis of an exogenous price model is not valid anymore. In order to limit their impacts on price, risk managers spread in time  their  selling/buying orders. A realistic hedging strategy has to take these constraints into account.
Besides  due to the  bid ask spread which is already a source of friction and due to the inaccuracy of the models used, hedging is only achieved once a week or twice a month so that the hypothesis of a continuous hedge is far from being satisfied. Further, on electricity  market, some contracts are not only dependent on the prices. For example a retailer has to provide the energy to fit a load curve which is uncertain mainly due to its sensitivity to  temperature and   business activity so that a perfect hedge of such contracts is impossible.\\
The literature on the effect of a discrete hedging has been theoretically studied  in complete markets in \citeA{zhang1999couverture}, \citeA{bertsimas2000time}, \citeA{hayashi2005evaluating}. 
In the case of incomplete markets  with some jumps some results about the hedging error due to the  discrete hedging dates can be found in \citeA{tankov2009asymptotic}.\\
The Leland proxy (\citeA{leland1985option}) was the first proposed to take into account discrete hedging dates and transaction costs. One theoretical result obtained using a Black Scholes model gives that when proportional transaction costs decrease with
the number of interventions $n$ proportionally to  $n^{-\alpha}$ with  $\alpha \in ]0, \frac{1}{2}]$,   the hedging error goes to zero in probability when a call is valued and hedged using the Black Scholes formula with a modified volatility.\\
Following this first work, a huge literature on transaction cost has developed. For example, the case $\alpha=0$ corresponding to the realistic case where the transaction cost are independent on the frequency to the hedge has been dealt in \citeA{kabanov1997leland}, \citeA{pergamenshchikov2003limit}, \citeA{denis2010mean}, \citeA{soner1995there}.\\
The case of a limited availability of hedging products has not been dealt theoretically nor numerically so far in the literature to our knowledge. Most of the research currently focuses on  developing some price impact features to model the liquidation of a position (see for example \citeA{gatheral2010no}) instead of imposing some depth limits.\\
In order to  get realistic hedging strategies and because of all the sources of incompleteness,  a risk  criterion has to be used to define the optimal hedging strategy of the contingent claim with a given payoff at maturity date $T$.
In this article we  use a mean variance strategy to define this optimal policy taking into account the fact that hedging is achieved at discrete dates, that transaction costs are present and that the orders are limited in volume at each date.\\
Mean variance hedging is theoretically well studied in the literature.
The first paper on the subject (\citeA{duffie1991mean}) in  a continuous setting was followed by a many articles on the subject.
As we mentioned before, the continuous assumption is not satisfied in practice and an approach taking into account some discrete dates of hedging was proposed in \citeA{follmer1988hedging}. As in the continuous case, some attention has been focus on finding semi-explicit formula for hedging strategies (see \citeA{goutte2014variance} and references inside).\\
In the general case, with transaction costs and constraints on hedging position, there are no explicit solutions so it is necessary to rely on numerical methods.
There are no algorithms proposed to solve this problem:
Mean variance algorithm without friction and based on tree resolution have been developed by \citeA{vcerny2004dynamic} but tree methods are only effective in very low dimension.
The main contribution of this article is to propose two algorithms based on a dynamic programming approach to solve this problem in the general case. Using Monte Carlo methods and relying on regressions  (Least Square Monte Carlo) to calculate the conditional expectations, we propose a first version
calculating the function value recursively in time by discretizing the command and  the amount invested in risky assets.
A second version is proposed tracking the optimal cash flows generated by the strategy using a methodology similar to the one used in  \citeA{longstaff2001valuing}. The use of the Monte Carlo method  permits to have flexible algorithm permitting to solve the problem in multi dimension with respect to the number of hedging products.\\
The article has the following structure:\\
In a first part, in a general setting using the previous work of \citeA{schweizer1995variance}, \citeA{motoczynski2000multidimensional} and \citeA{beutner2007mean}, we show that the problem admits a solution in a general case.\\
In a second part  we develop the two versions of the algorithm.
\\
In a third part, we focus on the problem of an energy retailer that needs to hedge an open position corresponding to a stochastic load curve by trading future contracts.
Using a Gaussian one factor HJM model for the future curve and an Ornstein Ulhenbeck model for the load curve, we  show that on electricity market,
taking into account the discrete hedge and the limited orders has a high impact on the optimal strategy.  We also compare numerically the two versions of the algorithm and study some convergence properties on one test case. \\
In the whole paper we only consider  the case with only one hedging instrument. There is no technical problem to deal with more hedging instruments but numerical experiments are easier with only one asset due to the different discretizations used  in the proposed algorithms.

\section{Mean variance hedging in a general framework}
In this section we suppose that $(\Omega, \Fc, (\Fc_t)_{t \in \left[0,T\right]})$ is a filtered probability space.
We define a set of trading dates $\mathcal{T}= \{t_0=0, t_1, \ldots , t_{N-1}, t_{N}=T \}$ and
we suppose that $(S_t)_{t_0.t_N}$ is  an  almost surely positive hedging product $(S_t)_{t_0.t_N}$,  square integrable so that $\E\left[ S_t^2\right] < \infty$  and adapted so that $S_{t}$ is $\Fc_t$-measurable
for $t=t_0, \ldots, t_N$.\\ 
Further we suppose that the risk free rate is zero so that a bond has always a value of $1$.\\
We suppose that the  contingent claim  $H \in \Lc^2(P)$ is  a $\Fc_T$-measurable random variable.
In the case of a European call option on an asset $S_t$ with strike $K$ and maturity $T$, $H(\omega)= (S_T(\omega)-K)^+$. \\
In this paper we are only interested in self-financing strategies with limited orders, so with bounded controls. Extending \citeA{motoczynski2000multidimensional},  \citeA{beutner2007mean}'s definition, we define:
\begin{definition}
A $(\bar m, \bar l)$ self-financing strategy $\Vc =(\Vc_{t_i})_{i=0,\ldots,N-1}$ is a pair of adapted process $(m_{t_i},l_{t_i})_{i=0, \ldots,N-1}$ defined for
$(\bar m, \bar l) \in (0, \infty) \times (0, \infty)$ such that:
\begin{itemize}
\item   $ 0 \le m_{t_i} \le \bar m,  \quad 0 \le l_{t_i} \le \bar l  \quad \quad    P.a.s. \quad \forall i = 0 ,\ldots,N-1$,
\item   $m_{t_i} l_{t_i} = 0  \quad P.a.s.  \quad \forall i = 0 ,\ldots,N-1$.
\end{itemize}
\end{definition}
In this definition $m_t$ corresponds to the number of shares sold at date $t$, and $l_t$ the number of shares bought at this date.
\begin{Remark}
The strategies defined in  \citeA{motoczynski2000multidimensional} and \citeA{beutner2007mean} do not impose that $m_t l_t= 0$ so a buy and sell control could happen at the same given date.
\end{Remark}

We note $\Theta^{(\bar m, \bar l)}$ the set of $(\bar m, \bar l)$ self-financing strategy and with the notation $\nu= ( m,l)$ for $\nu \in \Theta^{(\bar m, \bar l)}$.\\
We consider a model of proportional cost, so that an investor buying a share at date $t$ pays $(1+\lambda) S_t$ and an investor selling this share  only receives $(1-\lambda) S_t$.
Assuming that there are no transaction costs on the last date $T$, the terminal wealth of an investor with initial wealth $x$ is given by:
\begin{flalign}
x  - \sum_{i=0}^{N-1} (1+\lambda) l_{t_i}S_{t_i} + \sum_{i=0}^{N-1} (1-\lambda) m_{t_i}S_{t_i} + \sum_{i=0}^{N-1} l_{t_i} S_{t_N} -\sum_{i=0}^{N-1} m_{t_i}  S_{t_N}.
\end{flalign}
\begin{Remark}
The transaction costs on the last date $T$ are related to the nature of the contract. In the case of a pure financial contract, the investor  sells the asset and then some transaction costs have to be paid to clear the final position. On energy market for example, the contract is often associated to physical delivery and no special fees are to be paid.
Besides  on these markets,  futures markets are rather illiquid  and the associated  transaction costs are large whereas spot markets are much more liquid  with transaction costs that can be neglected.
\end{Remark}
As in   \citeA{motoczynski2000multidimensional}, \citeA{beutner2007mean}, we define the risk minimal strategy minimizing the $\Lc^2$ risk of the hedged portfolio supposing that the hedger liquidates her entire accumulated position at date $T$:
\begin{definition}
A $(\bar m, \bar l)$ self-financing strategy $\hat \Vc =(\hat m, \hat l)$ is global risk minimizing for the contingent claim $H$ and the initial capital $x$ if:
\begin{flalign}
\label{argminL2}
\hat \Vc = & \argmin_{\Vc = (m, l) \in \Theta^{(\bar m, \bar l)}}  \E \left[(H-x +\sum_{i=0}^{N-1} (1+\lambda) l_{t_i} S_{t_i} - \right.  \nonumber \\
&   \left. \sum_{i=0}^{N-1} (1-\lambda) m_{t_i} S_{t_i} -  \sum_{i=0}^{N-1} l_{t_i} S_{t_N} +\sum_{i=0}^{N-1} m_{t_i}  S_{t_N})^2 \right].
\end{flalign}
\end{definition}

In order to show the existence of  solution to problem  \eqref{argminL2}, we need to make some  assumptions on the prices as in \citeA{motoczynski2000multidimensional}, \citeA{beutner2007mean}:
\begin{assumption}
\label{assump1}
The price process $S=(S_{t_i})_{i=0,\ldots,N}$ is such that a constant $K_1 > 0$ satisfies:
\begin{flalign*}
\E\left[ \frac{S_{t_i}^2}{S_{t_{i-1}}^2} | \Fc_{t_{i-1}}\right] \le K_1,\quad P.a;s., \quad  \forall i = 1,\ldots,N.
\end{flalign*}
\end{assumption}
\begin{assumption}
\label{assump2}
The price process $S=(S_{t_i})_{i=0,\ldots,N}$ is such that a constant $K_2>0$ satisfies:
\begin{flalign*}
\E\left[ \frac{S_{t_{i-1}}^2}{S_{t_{i}}^2} | \Fc_{t_{i-1}}\right] \le K_2, \quad P.a.s., \quad \forall i = 1,\ldots,N.
\end{flalign*}
\end{assumption}
\begin{assumption}
\label{assump3}
The price process $S=(S_{t_i})_{i=0,\ldots,N}$ is such that a constant  $\delta \in (0,1)$ satisfies:
\begin{flalign*}
(\E\left[ S_{t_{i}}| \Fc_{t_{i-1}}\right])^2 \le  \delta E\left[  S_{t_{i}}^2| \Fc_{t_{i-1}}\right] , \quad P.a.s., \quad \forall i = 1,\ldots,N.
\end{flalign*}
\end{assumption}
We introduce   the gain function:
\begin{definition}
For $\Vc \in \Theta^{(\bar m, \bar l)}$,  we define the gain functional $G_T : \Theta^{(\bar m, \bar l)} \longrightarrow \Lc^2$ by
\begin{flalign}
\label{Gain}
 G_T(\Vc) := - \sum_{i=0}^{N-1} (1+ \lambda) l_{t_i} S_{t_i} + \sum_{i=0}^{N-1}l_{t_i} S_T + \sum_{i=0}^{N-1} (1- \lambda) m_{t_i} S_{t_i} - \sum_{i=0}^{N-1} m_{t_i} S_T
\end{flalign}
\end{definition}
In order to show that our problem is well-defined, we want to show that $G_T(\Theta^{(\bar m, \bar l)})$ is closed in $\Lc^2$. As  in \citeA{motoczynski2000multidimensional}, \citeA{beutner2007mean}, we have to introduce another functional $\tilde G_T(\Vc) : \Theta^{(\bar m, \bar l)} \longrightarrow \Lc^2 \times \Lc^2 \times  \Lc^2 $ defined by:
\begin{flalign}
\label{GainSep}
 \tilde G_T(\Vc) := \left( \begin{array}{l}
- \sum_{i=0}^{N-1} (1+ \lambda) l_{t_i} S_{t_i}  \\ \sum_{i=0}^{N-1} (1- \lambda) m_{t_i} S_{t_i} \\  \sum_{i=0}^{N-1} (l_{t_i} -m_{t_i}) S_T.
\end{array} \right)
\end{flalign}
\begin{proposition}
\label{PropCloseSep}
Under assumptions \ref{assump1},\ref{assump2},\ref{assump3},  $\tilde G_T(\Theta^{(\bar m, \bar l)})$ is a closed bounded convex set of $\Lc^2 \times \Lc^2 \times \Lc^2$.
\end{proposition}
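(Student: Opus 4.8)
The plan is to verify the three properties separately, the only subtle one being closedness. \textbf{Boundedness} is immediate: each coordinate of $\tilde{G}_T(\Vc)$ is a finite sum over $i=0,\dots,N-1$ of terms of the form (bounded control)$\times$(price), so that for every $\Vc\in\Theta^{(\bar m,\bar l)}$ one has $\|\,\cdot\,\|_{\Lc^2}\le (1+\lambda)\,\bar l\sum_{i}\|S_{t_i}\|_{\Lc^2}$ for the first coordinate, $\le (1-\lambda)\,\bar m\sum_i\|S_{t_i}\|_{\Lc^2}$ for the second, and $\le(\bar m+\bar l)\,N\,\|S_T\|_{\Lc^2}$ for the third. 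Since the $S_{t_i}$ are square integrable and there are finitely many trading dates, $\tilde{G}_T(\Theta^{(\bar m,\bar l)})$ lies in a fixed ball of $\Lc^2\times\Lc^2\times\Lc^2$; Assumptions \ref{assump1}--\ref{assump2} are what let one rewrite these bounds uniformly in terms of the one-step price ratios.

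For \textbf{convexity} I would use that $\tilde{G}_T$ is linear in $\nu=(m,l)$, so it maps convex sets to convex sets. The collection of adapted pairs satisfying only the box constraints $0\le m_{t_i}\le\bar m$, $0\le l_{t_i}\le\bar l$ is convex, hence its image is convex. The genuinely delicate constraint is $m_{t_i}l_{t_i}=0$, which is \emph{not} convex; I return to it below.

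\textbf{Closedness} I would obtain by a weak-compactness argument in the Hilbert space $\Lc^2(\Omega\times\{t_0,\dots,t_{N-1}\})$. Let $\nu^n=(m^n,l^n)\in\Theta^{(\bar m,\bar l)}$ be such that $\tilde{G}_T(\nu^n)$ converges strongly in $\Lc^2\times\Lc^2\times\Lc^2$ to some $(A,B,C)$. Because $0\le m^n\le\bar m$ and $0\le l^n\le\bar l$, the sequence $(\nu^n)$ is bounded, so after passing to a subsequence it converges weakly, $\nu^n\rightharpoonup\nu=(m,l)$. The box constraints and the adaptedness/measurability requirements cut out closed convex sets and are therefore preserved under weak limits, giving $0\le m\le\bar m$, $0\le l\le\bar l$ with $(m,l)$ adapted. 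As a bounded linear operator $\tilde{G}_T$ is weakly continuous, so $\tilde{G}_T(\nu^n)\rightharpoonup\tilde{G}_T(\nu)$; comparing with the strong (hence weak) limit and using uniqueness of weak limits yields $\tilde{G}_T(\nu)=(A,B,C)$. Thus the image is weakly sequentially closed, and being convex it is strongly closed as well.

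\textbf{Main obstacle.} The one step that does not survive this argument untouched is exactly the nonconvex constraint $m_{t_i}l_{t_i}=0$: the set $\{m l=0\}$ is not weakly closed (indicator controls oscillating between ``buy'' and ``sell'' converge weakly to an interior point with $ml\neq0$), so the weak limit $\nu$ produced above a priori only satisfies the box constraints. I expect this to be handled in one of two ways. Either by a Lyapunov/bang-bang ``purification'' showing that $\tilde{G}_T(\Theta^{(\bar m,\bar l)})$ already coincides with $\tilde{G}_T$ of its convexified (box-constrained) control set, so that closedness and convexity may be proved on the convex relaxation without enlarging the attainable gains; or, more robustly, by proving the closed/bounded/convex properties on the relaxed set and then recovering $m_{t_i}l_{t_i}=0$ at the optimizer, since simultaneous buying and selling is strictly suboptimal once $\lambda>0$. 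Finally, Assumption \ref{assump3} is the nondegeneracy condition I would save for the companion step—deducing closedness of the summed gain space $G_T(\Theta^{(\bar m,\bar l)})$ from that of $\tilde{G}_T(\Theta^{(\bar m,\bar l)})$ via the continuous sum map—where it prevents the three coordinates from cancelling in a degenerate direction.
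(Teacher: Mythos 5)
Your overall strategy for closedness --- extract a weakly convergent subsequence of the bounded controls themselves and use preservation of closed convex constraint sets under weak limits --- is not the paper's route, and the point at which you concede defeat is precisely where the paper's argument lives. The paper never takes weak limits of $(m^n,l^n)$; instead it invokes Assumption \ref{assump3} together with Lemma 5.3 of Motoczynski--Zabczyk to upgrade the \emph{strong} $\Lc^2$ convergence of the sums $\sum_i l^n_{t_i}S_{t_i}$ and $\sum_i m^n_{t_i}S_{t_i}$ into strong $\Lc^2$ convergence of each individual product $l^n_{t_i}S_{t_i}$ and $m^n_{t_i}S_{t_i}$ (and Lemma 5 of Beutner, resting on Assumption \ref{assump1}, for weak convergence of the $S_T$-terms). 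That strong termwise convergence is exactly the extra information your weak-compactness argument throws away, and it is what makes the nonconvex constraint pass to the limit: writing $\E[l^\infty_{t_i}m^\infty_{t_i}1_{\Ac}]$ as $\E[(S_{t_i}m^\infty_{t_i}-S_{t_i}m^n_{t_i})\,l^\infty_{t_i}1_{\Ac}/S_{t_i}]+\E[(S_{t_i}l^n_{t_i}-S_{t_i}l^\infty_{t_i})\,m^n_{t_i}1_{\Ac}/S_{t_i}]$ (using $l^n m^n=0$), Cauchy--Schwarz and the bound $\E[1/S_{t_i}^2]\le K_2^i/S_0^2$ from Assumption \ref{assump2} kill both terms. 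The same $1/S_{t_i}^2$ integrability is how the box constraints are verified, by testing against $1_{\Ac}/S_T$. So your assignment of roles to the hypotheses is off: Assumptions \ref{assump1}--\ref{assump2} have nothing to do with boundedness (square integrability of $S$ plus boundedness of the controls suffices there, as you in fact use), Assumption \ref{assump3} is the engine of \emph{this} proposition rather than something to be saved for the passage to $G_T$, and Assumption \ref{assump2} is what controls the constraint verification. Of your two proposed repairs, ``purification'' cannot work: for $N=1$ the three coordinates determine $(m_0,l_0)$ outright, so an attainable triple with $m_0l_0>0$ on a non-null set has no preimage in $\Theta^{(\bar m,\bar l)}$; and proving closedness only for the relaxed set proves a different proposition.

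Two further remarks. First, your observation that $\tilde G_T$ being linear does not by itself give convexity of the image, because the constraint $m_{t_i}l_{t_i}=0$ makes $\Theta^{(\bar m,\bar l)}$ nonconvex, is correct and is a genuine weakness of the statement that the paper's own one-line justification does not address (the $N=1$ example above shows the midpoint of two attainable triples need not be attainable); you are right to flag it, but flagging is not proving. Second, your claim that $\tilde G_T$ is ``a bounded linear operator, hence weakly continuous'' on $\Lc^2$ is false as stated --- multiplication by a merely square-integrable $S_{t_i}$ is unbounded on $\Lc^2$ --- though it is repairable here because the controls are uniformly bounded in $L^\infty$, so weak-$*$ convergence in $L^\infty$ plus $S_{t_i}\phi\in\Lc^1$ gives weak convergence of $l^n_{t_i}S_{t_i}$. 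Even with that repair, the argument still founders on $m_{t_i}l_{t_i}=0$, so the proposal does not establish the proposition.
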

\begin{proof}
The convexity of the set is  due to the linearity of the $\tilde G_T$ operator. Because of the boundedness of $(l_{t_i}, m_{t_i})$, $i=0, N-1$ and the $(S_{t_i})_{i=0,N}$ square integrability, we have that $\tilde G_T(\Theta^{(\bar m, \bar l)})$ is bounded in $\Lc^2 \times \Lc^2 \times \Lc^2 $.\\
In order to prove the closedness of $\tilde G_T(\Theta^{(\bar m, \bar l)})$, we suppose that we have a sequence  $ (\Vc^n) = (m^n_{t_i}, l^n_{t_i})_{i=0, N-1} \in  \Theta^{(\bar m, \bar l)}$ such that $\tilde G_T(\Vc^n)$  converges in $\Lc^2 \times \Lc^2 \times \Lc^2 $. \\
Using assumption  \ref{assump3} and lemma 5.3 in \citeA{motoczynski2000multidimensional}, we get that for $ i= 0,\ldots, N-1$, there exist $l_{t_i}^\infty$, $m_{t_i}^\infty$ $\Fc_{t_i}$-adapted such that $l_{t_i}^\infty S_{t_i} \in \Lc^2$, $m_{t_i}^\infty S_{t_i} \in \Lc^2$ and
\begin{flalign*}
l_{t_i}^n S_{t_i}  \xrightarrow{\Lc^2} l_{t_i}^\infty S_{t_i},\\
m_{t_i}^n S_{t_i}  \xrightarrow{\Lc^2}  m_{t_i}^\infty S_{t_i},
\end{flalign*}
so that
\begin{flalign*}
-  \sum_{i=0}^{N-1} (1+ \lambda) l_{t_i}^n S_{t_i} \xrightarrow{\Lc^2} -  \sum_{i=0}^{N-1} (1+ \lambda) l_{t_i}^\infty S_{t_i}, \\
 \sum_{i=0}^{N-1} (1- \lambda) m_{t_i}^n S_{t_i} \xrightarrow{\Lc^2} \sum_{i=0}^{N-1} (1- \lambda) m_{t_i}^\infty S_{t_i}.
\end{flalign*}
Using Lemma 5 in \citeA{beutner2007mean} based on assumption\ref{assump1}, we get that $l^n_{t_i} S_{T}$ converges weakly to  $l^\infty_{t_i} S_{T}$, and
 $m^n_{t_i} S_{T}$ converges weakly to  $m^\infty_{t_i} S_{T}$,  for $i=0,\ldots,N-1$ such that
 \begin{flalign*}
 \sum_{i=0}^{N-1} (l_{t_i}^n -m_{t_i}^n) S_T \xrightarrow{\text{weak}}   \sum_{i=0}^{N-1} (l_{t_i}^\infty -m_{t_i}^\infty) S_T.
\end{flalign*}
Using the fact that strong and weak limit coincide in $\Lc^2$, we get that the limit of $\tilde G_T(\Vc^n)$ is $\tilde G_T( \Vc^\infty)$ with $\Vc^\infty = (m^\infty, l^\infty) \in \Lc^2 \times \Lc^2$.\\
We still have to show that $l_{t_i}^\infty$, $m^\infty_{t_i}$ respect the constraints:
First using assumption\ref{assump2} and the tower property of conditional expectation, we have
\begin{flalign}
\label{unSurS2Bound}
\E\left[ \frac{1}{S_{t_{i}}^2} \right] \le \frac{K_2^{i}}{S_0^2} < \infty, \forall i=0,\ldots, N.
\end{flalign}
Suppose for example that there exist $\epsilon > 0$ and $i <N$ such that $l_{t_i}^\infty(\omega) > \bar l +\epsilon$ for $\omega$ in a set $\Ac$  with a non zero measure $\tilde \Ac$.
Then, 
\begin{flalign*}
 (\bar l + \epsilon) \tilde \Ac  \le& E\left[ l^\infty_{t_i} 1_{\Ac} \right], \nonumber\\
                     = &E\left[ (l^\infty_{t_i}- l^p_{t_i}) 1_{\Ac}\right] + E\left[ l^p_{t_i} 1_{\Ac}\right], \nonumber \\
                     \le&  E\left[ (l^\infty_{t_i}S_T- l^p_{t_i}S_T) 1_{\Ac}  \frac{1}{S_T}\right] + \bar l \tilde \Ac.
\end{flalign*}
Then using \eqref{unSurS2Bound} we know that $1_{\Ac}  \frac{1}{S_T}$ is in $\Lc^2$ so using  the weak convergence of  $l^p_{t_i}S_T$ towards $l^\infty_{t_i}S_T$
and letting $p$ go to infinity gives the contradiction.\\
The same type of calculation   shows that $l^\infty_{t_i} \ge 0$, $ 0 \le \m^\infty_{t_i} \le \bar m$.\\
At last it remains to show that $l^\infty m^\infty =0$. Suppose for example that there exist $\epsilon > 0$ and $i <N$ such that $l_{t_i}^\infty(\omega) m_{t_i}^\infty(\omega)  > \epsilon$ for $\omega$ in a set $\Ac$  with a non zero measure $\tilde \Ac$.\\
Using the fact that $l_{t_i}^nm_{t_i}^n =0$,  $\frac{m_{t_i}^n 1_{\Ac}}{S_{t_i}} \in \Lc^2$ and Cauchy Schwartz:
\begin{flalign*}
  \epsilon \tilde \Ac  \le & E\left[ l_{t_i}^\infty m_{t_i}^\infty  1_{\Ac} \right], \\
= &  \E\left[(S_{t_i} m_{t_i}^\infty - S_{t_i} m_{t_i}^n) \frac{l_{t_i}^\infty 1_{\Ac}}{S_i} \right] + \E\left[(S_{t_i}l_{t_i}^n- S_{t_i}l_{t_i}^\infty)  \frac{m_{t_i}^n 1_{\Ac}}{S_{t_i}} \right] \\
\le & |\E\left[(S_{t_i} m_{t_i}^\infty - S_{t_i} m_{t_i}^n) \frac{l_{t_i}^\infty 1_{\Ac}}{S_i} \right]| + ||S_{t_i}l_{t_i}^n- S_{t_i}l_{t_i}^\infty||_{\Lc^2}  \bar m   \tilde \Ac  \frac{K_2^{0.5 i}}{S_0}.
\end{flalign*}
Letting $n$ go to infinity, and using the strong/weak convergence property we get the contradiction.
\end{proof}
We are now in position to give the  existence and uniqueness theorem:
\begin{theorem}
Under assumptions \ref{assump1}, \ref{assump2}, \ref{assump3}, the minimization problem  \eqref{argminL2} admits a unique solution.
\end{theorem}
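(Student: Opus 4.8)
The plan is to recast \eqref{argminL2} as a Hilbert-space projection and to read off existence and uniqueness from the geometry of the attainable gains established in Proposition \ref{PropCloseSep}. Writing the integrand of \eqref{argminL2} as $(H-x)-G_T(\Vc)$, with $G_T$ the gain functional \eqref{Gain}, the problem is exactly
\[
\min_{\Vc\in\Theta^{(\bar m,\bar l)}}\big\|(H-x)-G_T(\Vc)\big\|_{\Lc^2}^2=\mathrm{dist}_{\Lc^2}\big(H-x,\;C\big)^2,\qquad C:=G_T(\Theta^{(\bar m,\bar l)}).
\]
Since $\Lc^2(P)$ is a Hilbert space, once $C$ is known to be a nonempty closed convex set the projection theorem delivers at one stroke a minimizer and its uniqueness, so the entire task is to transport the closed/bounded/convex structure of $\tilde G_T(\Theta^{(\bar m,\bar l)})$ furnished by Proposition \ref{PropCloseSep} to $C$.

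First I would note that $G_T=\Sigma\circ\tilde G_T$, where $\Sigma\colon\Lc^2\times\Lc^2\times\Lc^2\to\Lc^2$, $\Sigma(a,b,c)=a+b+c$, is a bounded linear operator: summing the three coordinates of $\tilde G_T(\Vc)$ in \eqref{GainSep} reproduces $G_T(\Vc)$ in \eqref{Gain}. Consequently $C=\Sigma\big(\tilde G_T(\Theta^{(\bar m,\bar l)})\big)$ is convex and bounded, as the image of a convex bounded set under a bounded linear map.

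The delicate point—and the step I expect to be the main obstacle—is the closedness of $C$: the norm-image of a closed set under a continuous linear map need not be closed in infinite dimensions, so collapsing the three coordinates by $\Sigma$ could a priori destroy closedness. I would circumvent this through weak compactness. By Proposition \ref{PropCloseSep}, $\tilde G_T(\Theta^{(\bar m,\bar l)})$ is bounded, closed and convex in the reflexive space $\Lc^2\times\Lc^2\times\Lc^2$, hence weakly compact; since $\Sigma$ is linear and continuous it is continuous for the weak topologies, so $C=\Sigma(\tilde G_T(\Theta^{(\bar m,\bar l)}))$ is weakly compact, therefore weakly closed, and, being convex, norm-closed. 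Equivalently, one may argue directly on a minimizing sequence $(\Vc^n)$: the associated triples $\tilde G_T(\Vc^n)$ admit a weakly convergent subsequence whose limit lies in $\tilde G_T(\Theta^{(\bar m,\bar l)})$ by weak closedness of a closed convex set, and the weak lower semicontinuity of $g\mapsto\|(H-x)-g\|_{\Lc^2}^2$ then forces the corresponding gain to attain the infimum.

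Finally I would invoke the projection theorem: $C$ being nonempty closed convex and $H-x\in\Lc^2$, there is a unique $\hat g\in C$ realizing $\mathrm{dist}_{\Lc^2}(H-x,C)$, and any $\hat\Vc\in\Theta^{(\bar m,\bar l)}$ with $G_T(\hat\Vc)=\hat g$ is a global risk minimizer; the optimal gain $\hat g$ and the minimal quadratic risk are thereby unique, which is the assertion of the theorem. I would only flag that, because the constraint $m_{t_i}l_{t_i}=0$ makes the admissible set of controls non-convex while the attainable-gain set $C$ remains convex, uniqueness is naturally understood at the level of the optimal gain and value rather than of the particular control $\hat\Vc$ realizing it.
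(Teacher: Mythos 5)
Your proposal is correct and follows essentially the same route as the paper: both transfer the closed--bounded--convex structure of $\tilde G_T(\Theta^{(\bar m,\bar l)})$ from Proposition \ref{PropCloseSep} to $G_T(\Theta^{(\bar m,\bar l)})$ via weak compactness/weak convergence and the linearity of the coordinate sum, and then conclude by the Hilbert projection theorem in $\Lc^2$. Your closing caveat---that uniqueness is cleanly guaranteed only for the optimal gain $G_T(\hat\Vc)$ and the minimal risk, not automatically for the control $\hat\Vc$ itself---is a legitimate refinement that the paper's proof passes over silently.
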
 
\begin{proof}
First we show  that  $G_T(\Theta^{(\bar m, \bar l)})$ is a closed bounded convex set in $\Lc^2$. Convexity and boundedness are straightforward.
Suppose that  we have a sequence  $ (\Vc^n) = (m^n_{t_i}, l^n_{t_i})_{i=0, N-1} \in  \Theta^{(\bar m, \bar l)}$ such that $G(\Vc^n)$  converges in $\Lc^2$. 
 $\tilde G_T(\Theta^{(\bar m, \bar l)})$  being a bounded closed convex set due to proposition \ref{PropCloseSep}, it is weakly closed, and we can
extract a sub sequence  $ \Vc^p$  such that $\tilde G_T(\Vc^p)$ converges weakly to $\tilde G_T(\Vc^\infty)$ with $\Vc^\infty \in \Theta^{(\bar m, \bar l)}$.
Then by linearity $G_T(\Vc^p)$ converges weakly to $G_T(\Vc^\infty)$ and due to the fact that weak and strong limit coincide, we get that $G_T(\Vc^p)$ converges strongly to $G_T(\Vc^\infty)$ giving the closedness of $G_T(\Theta^{(\bar m, \bar l)})$.\\
As a result,  the solution of the minimization problem exists, is  unique  and  corresponds to the projection of $H$ on $G_T(\Theta^{(\bar m, \bar l)})$  in $\Lc^2$.
\end{proof}

\section{Algorithms for mean variance hedging}
In this section we suppose that the process is Markov and that the payoff $H$ is a function of the asset value at maturity only to simplify the presentation for the Monte Carlo methods proposed. In a first section  we give two algorithms  based on dynamic programming and regressions to calculate the solution of the problem  \eqref{argminL2}. In a second part we detail  the resolution  for one algorithm using some specific regressors and discretizing commands and stocks.
\subsection{Some dynamic programming algorithms}
We introduce the global position $\nu= (\nu_i)_{i=0, \ldots, N-1}$ with:
\begin{flalign*}
\nu_{i} = \sum_{j=0}^i (m_{t_j} - l_{t_j}), \forall  i=0,\ldots, N-1.
\end{flalign*}
Using the property that $m_{t_i}l_{t_i} =0$ for all  $ i=0, \ldots, N-1$, we get that $|\nu_{i}-\nu_{i-1}| = l_{t_i} +m_{t_i}$  with the convention that $\nu_{-1} =0$ so that the gain functional can  rewritten as:
\begin{flalign*}
G_T(\Vc) = \hat G_T(\nu) =  x - \sum_{i=0}^{N-1} \lambda |\Delta \nu_{i-1} |  S_{t_i} + \sum_{i=0}^{N-1} \nu_{i} \Delta S_i,
\end{flalign*}
where $\Delta S_i = S_{t_{i+1}} -S_{t_i}$, $\Delta \nu_i= \nu_{i+1} -\nu_{i}$.\\
We then introduce $\hat \Theta^{(\bar m, \bar l)}$ the set of  adapted random variable $(\nu_{i})_{i=0,\ldots,N-1}$ such that
\begin{flalign*}
 - \bar m \le \nu_{i}- \nu_{i-1} \le \bar l, \quad  \forall i=1,\ldots,N-1.
\end{flalign*}
The  problem \eqref{argminL2} can be rewritten  as done in \citeA{schweizer1995variance} finding $\hat \nu=(\hat \nu_i)_{i=0, \ldots, N-1}$ satisfying:
\begin{flalign}
\label{argminL2Bis}
\hat \nu = & \argmin_{\nu \in \hat \Theta^{(\bar m, \bar l)}}  \E\left[\big(H- x- \hat G_T(\nu)\big)^2 \right].
\end{flalign}
We introduce the spaces $K_i$,  $i=0,\ldots,N$ of the $\Fc_{t_i}$-measurable and square integrable random  variables.
We define for  $i \in 0,\ldots,N$, $V_i \in K_i$   as:
\begin{flalign}
\label{VDef}
V_{N} =&  H, \nonumber \\
V_{i} =& \E\left[ H - \sum_{j=i}^{N-1}  \nu_{j}  \Delta S_{j} +  \lambda \sum_{j=i}^{N-1} | \Delta \nu_{j-1}|  S_{t_j} \quad |\Fc_{t_i}\right], \forall i =0,\ldots,N-1.
\end{flalign}
Then we can prove the following proposition:
\begin{proposition}
  The problem  \eqref{argminL2Bis} can be rewritten as:
  \begin{flalign}
\label{argminTer}
\hat \nu = & \argmin_{ \nu \in \hat \Theta^{(\bar m, \bar l)}}   \E\left[ \left(V_{N} - \nu_{N-1} \Delta S_{N-1} +  \lambda |\Delta \nu_{N-2}| S_{t_{N-1}}- V_{N-1}\right)^2\right] + \nonumber  \\
& \sum_{i=2}^{N-1} \E\left[ \left(V_{i} +   \lambda  |\Delta \nu_{i-2}| S_{t_{i-1}} - \nu_{i-1} \Delta S_{i-1} -V_{i-1}\right)^2\right] + \nonumber \\
& \E\left[\left(V_{1} +  \lambda |\nu_{0}| S_{t_0} - \nu_{0}  \Delta S_0- x \right)^2\right].
\end{flalign} 
\end{proposition}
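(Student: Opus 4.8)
The plan is to realize the right-hand side of \eqref{argminTer} as the expansion of the single square in \eqref{argminL2Bis}, by decomposing its integrand into a telescoping sum of increments that are pairwise orthogonal in $\Lc^2$. Concretely, I would set, matching term by term the summands appearing in \eqref{argminTer},
\[
R_1 := V_1 + \lambda |\nu_0| S_{t_0} - \nu_0 \Delta S_0 - x, \qquad R_i := V_i + \lambda |\Delta\nu_{i-2}| S_{t_{i-1}} - \nu_{i-1}\Delta S_{i-1} - V_{i-1} \quad (i=2,\dots,N),
\]
so that the right-hand side of \eqref{argminTer} is exactly $\sum_{i=1}^N \Esp{R_i^2}$ (the $i=N$ term of \eqref{argminTer} is $R_N$ written in a different order). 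It then suffices to prove two things: that $\sum_{i=1}^N R_i$ equals the integrand $H - x - \hat G_T(\nu)$ of \eqref{argminL2Bis}, and that the cross terms $\Esp{R_i R_k}$ vanish for $i \ne k$.

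First I would check the telescoping identity. Summing the $V$-differences collapses to $V_N - x$, since the bottom increment $R_1$ contributes $V_1 - x$ in place of $V_1 - V_0$; using $V_N = H$ from \eqref{VDef} and the convention $\nu_{-1}=0$, the remaining terms reassemble, after the index shift $j=i-1$, into $-\sum_{j=0}^{N-1}\nu_j \Delta S_j + \lambda\sum_{j=0}^{N-1}|\Delta\nu_{j-1}|S_{t_j}$, i.e.\ into $-\hat G_T(\nu)$. This yields $\sum_{i=1}^N R_i = H - x - \hat G_T(\nu)$, so the two quantities inside the respective expectations coincide once the square of the sum is expanded.

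The heart of the argument is the orthogonality, which rests on a one-step martingale property of the $V_i$. Introducing the partial-cost variable $W_i := H - \sum_{j=i}^{N-1}\nu_j \Delta S_j + \lambda\sum_{j=i}^{N-1}|\Delta\nu_{j-1}|S_{t_j}$, so that $V_i = \Esp{W_i \mid \Fc_{t_i}}$, the relation $W_{i-1} = W_i - \nu_{i-1}\Delta S_{i-1} + \lambda|\Delta\nu_{i-2}|S_{t_{i-1}}$ together with the tower property and the $\Fc_{t_{i-1}}$-measurability of $\nu_{i-1}$, $\Delta\nu_{i-2}$, $S_{t_{i-1}}$ gives the recursion $V_{i-1} = \Esp{V_i \mid \Fc_{t_{i-1}}} - \nu_{i-1}\Esp{\Delta S_{i-1}\mid \Fc_{t_{i-1}}} + \lambda|\Delta\nu_{i-2}|S_{t_{i-1}}$. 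This says precisely that $\Esp{R_i\mid\Fc_{t_{i-1}}}=0$ for $i=2,\dots,N$. Since each $R_i$ is $\Fc_{t_i}$-measurable (every factor is known by date $t_i$, and $V_i\in K_i$), for any $i<k$ the factor $R_i$ is $\Fc_{t_{k-1}}$-measurable while $\Esp{R_k\mid\Fc_{t_{k-1}}}=0$, whence $\Esp{R_i R_k}=\Esp{R_i\,\Esp{R_k\mid\Fc_{t_{k-1}}}}=0$. Consequently $\Esp{\big(\sum_i R_i\big)^2}=\sum_i\Esp{R_i^2}$, which is the claimed identity.

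I expect the delicate point to be the bottom increment $R_1$: unlike the others it is \emph{not} conditionally centred, because at the initial date the recursion produces $V_0$ whereas \eqref{argminTer} carries the constant $x$ in its place. This does not spoil the orthogonality, since in every cross term $\Esp{R_i R_k}$ (with $i<k$, say) the higher-indexed increment has index at least $2$ and is therefore centred given its past, which is all that the argument uses; one must simply record the weaker measurability of $R_1$ and not attempt to force $\Esp{R_1\mid\Fc_{t_0}}=0$. The remaining verifications — the telescoping bookkeeping under the $\nu_{-1}=0$ convention, and the integrability needed to pull factors through the conditional expectations (guaranteed by $S_{t_i}\in\Lc^2$, the boundedness of $\nu$ on $\hat\Theta^{(\bar m, \bar l)}$, and $V_i\in K_i$) — are routine.
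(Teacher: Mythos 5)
Your proof is correct and follows essentially the same route as the paper's: both rest on decomposing $H-x-\hat G_T(\nu)$ into the telescoping increments $R_i$ and on the conditional centering $\E[R_i\mid\Fc_{t_{i-1}}]=0$ that comes from the definition \eqref{VDef} of $V_i$, the only difference being that you verify the pairwise orthogonality $\E[R_iR_k]=0$ directly while the paper peels the terms off one at a time by iterated conditioning. Your explicit observation that the bottom increment $R_1$ (which carries $x$ in place of $V_0$) is not conditionally centred, and that this is harmless because every cross term contains a higher-indexed centred factor, is a point the paper passes over in silence.
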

\begin{proof}
\begin{flalign}
\label{eq:globMin}
\E\left[(H- x- \hat G_T(\nu))^2 \right] = & E\left[ \big( \left(V_{N} - \nu_{N-1} \Delta S_{N-1}  +  \lambda |\Delta \nu_{N-2}| S_{t_{N-1}} - V_{N-1} \right)+ \right. \nonumber \\ 
& \sum_{i=2}^{N-1} \left(V_i +  \lambda | \Delta \nu_{i-2}| S_{t_{i-1}}  - \nu_{i-1} \Delta S_{i-1} - V_{i-1} \right) +  \nonumber\\
&  \left. \left( V_{1} + \lambda |\nu_0| S_{t_{0}}  - \nu_{0}  \Delta S_0  -x\right) \big)^2\right] 
\end{flalign}
Due to the  definition \eqref{VDef}, we have that
\begin{flalign}
\label{eq:Balance}
 E\left[  V_{i} +  \lambda |\Delta \nu_{i-2} | S_{t_{i-1}} - \nu_{i-1}  \Delta S_{i-1}-V_{i-1}  |\Fc_{t_{i-1}}\right] =0, \forall i =1,\ldots,N,
\end{flalign}
so that
\begin{flalign*}
\E\left[(H- x- \hat G_T(\nu))^2 \right] = & \E\left[ \E\left[ \left(V_N - \nu_{N-1} \Delta S_{N-1}  +  \lambda |\Delta \nu_{N-2}| S_{t_{N-1}}- V_{N-1}\right)^2 | \Fc_{t_{N-1}}\right]\right]+ \\
 & \E\left[ \sum_{i=2}^{N-1} \left(V_{i} +  \lambda |\Delta \nu_{i-2}| S_{t_{i-1}} - \nu_{i-1} \Delta S_{i-1} -V_{i-1} \right)^2 + \right.\\
  & \left.  \left( V_{1} +  \lambda |\nu_{0}| S_{t_{0}}  - \nu_{0} \Delta S_0 - x  \right)^2\right]
\end{flalign*}
and iterating the process gives
\begin{flalign*}
\E\left[(H- x- \hat G_T(\nu))^2 \right] = &  \E\left[ \left(V_{N} - \nu_{N-1}  \Delta S_{N-1} +  \lambda |\Delta \nu_{N-2}| S_{t_{N-1}} - V_{N-1}\right)^2\right] + \\
& \sum_{i=2}^{N-1} \E\left[ \left(V_{i} +  \lambda |\Delta \nu_{i-2}| S_{t_{i-1}} - \nu_{i-1} \Delta S_{i-1} -V_{i-1}\right)^2\right] + \\
& \E\left[\left(V_{1} +  \lambda | \nu_{0}| S_{t_0} - \nu_{0}  \Delta S_0 - x \right)^2\right],
\end{flalign*}
\end{proof}
which gives the result.

We introduce the space
\begin{flalign*}
W_i^{\bar m, \bar l}(\eta) = & \{ (V, \nu) / 
V , \nu \mbox{ are } \R \mbox{ valued } \Fc_{t_i}\mbox{-adapted with }  - \bar m \le  \nu - \eta  \le \bar l \},
\end{flalign*}
and the space
\begin{flalign*}
\hat W_i^{\bar m, \bar l}(\eta) = & \{ (V, \nu_i,\ldots, \nu_{N-1}) / 
 V  \mbox{ is } \R \mbox{ valued, } \Fc_{t_i}\mbox{-adapted , the }  \nu_j , j \ge i \mbox{ are } \R \mbox{ valued } \\
 & \Fc_{t_j}\mbox{-adapted with }   \bar m \le  \nu_i - \eta  \le \bar l   , \bar m \le  \nu_{j+1} - \nu_j  \le \bar l \mbox{ for }   i \le j < N-1\}.
\end{flalign*}
The formulation \eqref{argminTer} can be used to solve the problem by dynamic programming when the price process is Markov and the payoff a function of the asset value at maturity.\\
We introduce the optimal residual $R$ at date $t_i$, for a current price $S_{t_i}$ and having in  portfolio an investment in $\nu_{i-1}$ assets:
\begin{align}
\label{eq:RArgmin}
R(t_i, S_{t_i}, \nu_{i-1}) =  \min_{ (V , \nu) \in  \hat W_i^{\bar m, \bar l}(\nu_{i-1}) }\E\left[ \big (H - \sum_{j=i}^{N-1}  \nu_{j}  \Delta S_{j} +  \lambda \sum_{j=i}^{N-1} | \Delta \nu_{j-1}|  S_{t_j} -V  \big)^2 \quad |\Fc_{t_i}\right].
\end{align}
Then equation \eqref{argminTer} gives:
\begin{align}
R(t_i, S_{t_i}, \nu_{i-1}) =  & \min_{ (V , \nu) \in  W_i^{\bar m, \bar l}(\nu_{i-1}) }\E\left[ \big ( V - \nu_{i}  \Delta S_{i}  + \lambda  | \Delta \nu_{i-1}|  S_{t_i}  - \tilde V \big)^2  + R(t_{i+1}, S_{t_{i+1}},\nu  ) |\Fc_{t_i}\right],
\end{align}
where $\tilde V$  is the first component of the argmin in equation \eqref{eq:RArgmin} calculating $R(t_{i+1}, S_{t_{i+1}},\nu  )$.\\
The underlying recursion gives the corresponding algorithm \ref{algoMeanVar2} where the optimal initial value of cash needed to minimize the future mean variance risk  at the date $t_{i}$ with an asset value $S_{t_i}$  for an  investment $\nu_{i-1}$ chosen at date $t_{i-1}$ is noted $\tilde V(t_i,S_{t_i}, \nu_{i-1})$.
\begin{algorithm}[H]
\caption{\label{algoMeanVar2}Backward resolution for $\Lc^2$ minimization problem with iterated conditional expectation approximation.}
\begin{algorithmic}[1]
\State $  \begin{array}{ll}
  \tilde V(t_{N}, S_{t_{N}}(\omega), \nu_{N-1})  = H(\omega), \quad \forall \nu_{N-1} 
\end{array} $
\State  $    R(t_{N}, S_{t_{N}}(\omega), \nu_{N-1})  =  0, \quad \forall \nu_{N-1}$
\For{$i = N, 2$}
\State \be  
\label{eq:recur}
\begin{array}{ll}
(\tilde V(t_{i-1}, S_{t_{i-1}}, \nu_{i-2}),\nu_{i-1} ) = & \argmin_{ (V,\nu) \in W_{i-1}^{\bar m, \bar l}(\nu_{i-2})}  \E\left[ (  \tilde V(t_i, S_{t_i}, \nu) - \right. \\ 
 & \nu  \Delta S_{i-1}
+ 
\lambda |\nu-\nu_{i-2}| S_{t_{i-1}} -   V )^ 2 +  \\
&  \left. R(t_i, S_{t_i}, \nu) | \Fc_{t_{i-1}}\right]
\end{array} 
\ee
\State $  \begin{array}{ll}  R(t_{i-1}, S_{t_{i-1}}, \nu_{i-2}) = & \E\left[ ( \tilde V(t_i, S_{t_i}, \nu_{i-1}) - \nu_{i-1}  \Delta S_{i-1}
+ \right. \\
&  \lambda |\Delta \nu_{i-2}| S_{t_{i-1}} -   \tilde V(t_{i-1}, S_{t_{i-1}}, \nu_{i-2} ))^ 2 +  \\
&  \left. R(t_i, S_{t_i}, \nu_{i-1})| \Fc_{t_{i-1}}\right]
\end{array} $
\EndFor
\State $ \begin{array}{ll}
\nu_{0} = & \argmin_{\nu \in \left[-\bar m,\bar l\right]}  \E\left[ (V(t_1,S_{t_1}, \nu) + \lambda| \nu| S_{t_0} - \nu \Delta S_{0}- x)^2 +   R(t_{1}, S_{t_{1}}, \nu) \right]
\end{array} $
\end{algorithmic}
\end{algorithm}
Note that  at date $t_i$, the optimal investment strategy
$\eta^*$ calculated  by equation \eqref{eq:recur} is a function of $S_{t_i}$, $\nu_{i-1}$.

\begin{Remark}
The previous algorithm can be be easily modified to solve local risk minimization problem of  Schweizer (see \citeA{schweizer1999guided}) where some liquidity constraints have been added. The equation \eqref{eq:recur} has to be modified by
\begin{flalign*}
(\tilde V(t_{i-1}, S_{t_{i-1}}, \nu_{i-2}),\nu_{i-1} ) = & \argmin_{ (V,\nu) \in W_{i-1}^{\bar m, \bar l}(\nu_{i-2})}  \E\left[ ( \tilde V(t_i, S_{t_i}, \nu_{i-1}) - \right. \\ 
 & \left. \nu  \Delta S_{i-1}
+ 
\lambda |\nu-\nu_{i-2}| S_{t_{i-1}} -   V )^ 2 | \Fc_{t_{i-1}}\right].
\end{flalign*} 
\end{Remark}
Due to approximation errors linked to the methodology used to estimate conditional expectation, the $R$ estimation in algorithm \ref{algoMeanVar2} may be prone to an  accumulation of errors during
the time iterations.
Similarly to the scheme introduced in  \citeA{bender2007forward} to improve the methodology proposed in \citeA{gobet2005regression} to solve Backward Stochastic Differential Equations, we can propose a second version  of the previous algorithm that updates  the gain functional $\bar R$  $\omega$ by $\omega$. Then $\bar R$ satisfies at date $t_{i}$ with an asset value $S_{t_i}$  for an  investment $\nu_{i-1}$ chosen at date $t_{i-1}$:
\begin{align*}
\bar R(t_i, S_{t_i}, \nu_{i-1}) = & H - \sum_{j=i}^{N-1}  \nu_{j}  \Delta S_{j} +  \lambda \sum_{j=i}^{N-1} | \Delta \nu_{j-1}|  S_{t_j},\\
& = \bar R(t_{i+1}, S_{t_{i+1}}, \nu_{i}) -\nu_{i}  \Delta S_{i} +  \lambda | \Delta \nu_{i-1}|  S_{t_i}, 
\end{align*}
and at the date $t_i$ according to equation \eqref{eq:globMin}  the  optimal control  is the control $\nu$ associated to the minimization  problem:
\begin{align*}
\min_{(V,\nu) \in W_{i}^{\bar m, \bar l}(\nu_{i-1})} \E\left[ (\bar R(t_{i+1}, S_{t_{i+1}}, \nu) -  \nu  \Delta S_{i}
+ \lambda |\nu-\nu_{i-1}| S_{t_{i}} - V )^ 2 | \Fc_{t_{i}}\right].
\end{align*}
This leads to the second algorithm \ref{algoMeanVar3}.
\begin{algorithm}[H]
\caption{\label{algoMeanVar3}Backward resolution for $\Lc^2$ minimization problem avoiding conditional expectation iteration.}
\begin{algorithmic}[1]
\State  $ \bar R(t_{N}, S_{t_{N-1}}(\omega), \nu_{N-1})  =  H(\omega),  \quad \forall \nu_{N-1}$
\For{$i = N, 2$}
\State   \begin{eqnarray}
  (\tilde V(t_{i-1}, S_{t_{i-1}}, \nu_{i-2}),\nu_{i-1}) = & \argmin_{ (V,\nu) \in W_{i-1}^{\bar m, \bar l}(\nu_{i-2})}  \E\left[ (\bar R(t_i, S_{t_i}, \nu) - \right. \nonumber \\
    & \left. \nu  \Delta S_{i-1}
+ \lambda |\nu-\nu_{i-2}| S_{t_{i-1}} - V )^ 2 | \Fc_{t_{i-1}}\right]
\label{argminBack} \end{eqnarray}
\State $  \begin{array}{ll}  \bar R(t_{i-1}, S_{t_{i-1}}(\omega), \nu_{i-2}(\omega)) = &  \bar R(t_{i}, S_{t_{i}}(\omega), \nu_{i-1}(\omega)) - \nu_{i-1}(\omega)   \Delta S_{i-1}(\omega)
+  \lambda |\Delta \nu_{i-2}(\omega)| S_{t_{i-1}}(\omega)
\end{array} $
\EndFor
\State $ \begin{array}{ll}
\nu_{0} = & \argmin_{\nu \in \left[-\bar m,\bar l\right]}  \E\left[ ( \bar R(t_1, S_{t_1}, \nu) + \lambda| \nu| S_{t_0} - \nu \Delta S_{0} - x)^2  \right]
\end{array} $
\end{algorithmic}
\end{algorithm}

\begin{Remark}
\label{remarkMeanVar}
In order to treat the case of mean variance hedging that consists in finding the optimal strategy and the initial wealth to hedge the contingent claim the last line of algorithm \ref{algoMeanVar2} is replaced by
\begin{flalign*}
(\tilde V,\nu_{0}) = & \argmin_{(V,\nu)}  \E\left[ (V(t_1,S_{t_1}, \nu) + \lambda |\nu| S_{t_0} - \nu \Delta S_{0} - V)^2 + R(t_{1}, S_{t_{1}}, \nu) \right],
\end{flalign*}
and last line of algorithm \ref{algoMeanVar3} by
\begin{flalign*}
(\tilde V,\nu_{0}) =  \argmin_{ (V,\nu) \in \R \times \left[-\bar m,\bar l\right]}  \E\left[ ( \bar R(t_1, S_{t_1}, \nu) + \lambda | \nu| S_{t_0} - \nu \Delta S_{0} - V)^2  \right].
\end{flalign*}
\end{Remark}
\begin{Remark}
In the two algorithms presented an argmin has to be achieved:  a discretization in $\nu_{i-2}$ has to be achieved on a grid $\left[\nu_{i-1}-m,\nu_{i-1}+l\right]$.
\end{Remark}
\subsection{Effective implementation based on local regressions.}\label{sec:reg}
Starting from the theoretical algorithms  \ref{algoMeanVar2}, \ref{algoMeanVar3}  we aim at getting an  effective implementation based on  a representation of the function $\tilde V$ depending on time, $S_t$ and the position $\nu_t$ in the hedging assets. We only give a detailed implementation for the algorithm \ref{algoMeanVar3} but an adaptation is straightforward for the algorithm \ref{algoMeanVar2}.
To simplify the setting, we suppose that only one hedging product is available.
The extension in the multi-dimensional case is straightforward.
\begin{itemize}
    \item In order to represent the dependency in the hedging position we  introduce a time dependent grid  
\begin{flalign*}
\Qc_{i} :=  (\xi k)_{k=- (i+1)\lfloor \frac{\bar m}{\xi} \rfloor,\ldots, (i+1)\lfloor \frac{\bar l}{\xi} \rfloor} 
\end{flalign*}
where $\xi$ is the mesh size associated to the set of grids $(\Qc_i)_{i=0,N}$ and, if possible, chosen such that $\frac{\bar l}{\xi} =\lfloor \frac{\bar l}{\xi} \rfloor$ and $\frac{\bar m}{\xi}= \lfloor \frac{\bar m}{\xi} \rfloor$.
\item To represent the dependency in $S_t$ we  use a Monte Carlo method using simulated path $\left( (S_{t_i}^{(j)})_{i=0,\ldots, N}\right)_{j=1,\ldots, M}$ and calculate the $\argmin$ in equation \eqref{argminBack}  using a methodology close to the one described in \citeA{bouchard2012monte}:  
suppose that we are given at each date $t_i$ $(D^i_q)_{q=1,\ldots,Q}$ a partition of $\left[ \min_{j=1,M} S_{t_i}^{(j)}, \max_{j=1,M} S_{t_i}^{(j)}\right]$  such that each cell contains the same number of samples. We use the $Q$ cells  $(D^i_q)_{q=1,\ldots,Q}$ to represent the dependency of $\tilde V$ in the $S_{t_i}$ variable.\\
On each cell $q$ we search for $\hat V^q$ a linear approximation of the function $\tilde V$ at a given date $t_i$ and for a position $k \xi$ so that
$\hat V^q(t_i,  S, k) = a_i^q + b_i^q S$ is an approximation of $ \tilde V(t_i,S, k \xi)$.
To find the optimal command on each path by arbitrage,  conditional variance has to be calculated also by regression.
\end{itemize}
Let us note   $(l_i^q(j))_{j=1,\frac{M}{Q}}$  the set of all samples belonging to the cell $q$ at date $t_i$. On each mesh, for a current position $k \xi$ in the stock of assets, the optimal control $\hat \nu^q$ is obtained by discretizing the
command $\nu$ on a grid $\eta = ( (k+r) \xi  )_{r = - \lfloor \frac{\bar m}{\xi} \rfloor, \ldots, \lfloor \frac{\bar l}{\xi} \rfloor}$   and by testing the one giving a $\hat V^q$ value minimizing the $\Lc^2$ risk so solving equation \eqref{argminBack}.\\
The algorithm \ref{testCommand} permits to find the optimal $\nu_i^{(j)}(k)$ command  using algorithm \ref{algoMeanVar3} at  date $t_i$, for a hedging position $k \xi$ and for all the Monte Carlo simulations $j$. 
\begin{algorithm}[H]
\caption{\label{testCommand} Optimize minimal hedging position $(\hat \nu_{t_{i}}^{(l)}(k))_{l=1,\ldots,M}$ at date $t_{i-1}$}
\begin{algorithmic}[1]
\Procedure{OptimalControl}{ $\bar R(t_{i+1},.,.) , k, S_{t_i}, S_{t_{i+1}}$}
\For{$q=1, Q$}
\State $P=\infty$, 
\For{$l = -\lfloor \frac{\bar m}{\xi} \rfloor, \ldots,   \lfloor \frac{\bar l}{\xi} \rfloor $}
\State $ \begin{array}{ll}
(a_i^{q,k},b_i^{q,k} ) = \argmin_{ (a,b )\in \R^2}  &  \displaystyle{\sum_{j=1}^{\frac{M}{Q}}}  \big( \bar R(t_{i+1}, S_{t_{i+1}}^{l^q_i(j)}, (k+l) \xi)  - \\& (k+l) \xi \Delta S_{i}^{l^q_i(j)}
+   \\
&  \lambda |l \xi| S_{t_{i}}^{l^q_i(j)} - (a+b S_{t_{i}}^{l^q_i(j)})  \big)^ 2 
\end{array}
$ \Comment{Estimate the value function}
\State $ \begin{array}{ll} (c_i^{q,k},d_i^{q,k} )  =  \argmin_{ (c,d )\in \R^2}  &  \displaystyle{\sum_{j=1}^{\frac{M}{Q}}}   \big( ( \bar R(t_{i+1}, S_{t_{i+1}}^{l^q_i(j)}, (k+l) \xi)  - \\& (k+l) \xi \Delta S_{i}^{l^q_i(j)}
+   \\
&  \lambda |l \xi| S_{t_{i}}^{l^q_i(j)} - (a_i^{q,k}+b_i^{q,k} S_{t_{i}}^{l^q_i(j)}) )^2 -  \\
&(c+ d S_{t_{i}}^{l^q_i(j)}) \big)^2  \end{array}$ \Comment{Conditional variance}
\For{$j=1,\frac{M}{Q}$}
\If{$c_i^{q,k} + d_i^{q,k} S_{t_{i}}^{l^q_i(j)} < P(j)$} \Comment{Use conditional variance for arbitrage}
\State $\hat \nu_{i}^{(l^q_i(j))}(k) =  l \xi$, $P(j) = c_i^{q,k} + d_i^{q,k} S_{t_{i}}^{l^q_i(j)}$
\EndIf
\EndFor
\EndFor
\EndFor
\Return $( \hat \nu_{t_{i}}^{(j)}(k))_{j=1,\ldots,M}$ 
\EndProcedure
\end{algorithmic}
\end{algorithm}
\begin{Remark}
The algorithm is based on a discretization of the current hedging position and a discretization of the  admissible hedging positions potentially taken at the next date. So this approach has to face the curse of dimensionality and the number of hedging product shouldn't exceed two. Solving the problem with three hedging assets is possible using some libraries such as the StOpt Library  \citeA{gevret2016stochastic} on a cluster of CPU.
\end{Remark}
\begin{Remark}
 It is also possible to use  global polynomials in regressions as in \citeA{longstaff2001valuing} or to use kernel regression methods as in the recent paper \citeA{langrene2017fast}. A recent comparison of some regression techniques to value some gas storages can be found in \citeA{ludkovski2018simulation}.
\end{Remark}
\begin{Remark}
It is possible to use different discretization $\xi$ to define the set $\eta$ and the set $\Qc_{i}$. Then  grids used for the state and the control may not correspond. In order to save time, the grid for the state can be chosen coarser than the grid for the control.
An example of the use of such an interpolation for gas storage problem tracking the optimal cash flow generated along the Monte Carlo strategies can be found in \citeA{warin2012gas}.
\end{Remark}
\begin{Remark}
This algorithm permits to add some global constraints on the global liquidity of the hedging asset. This is achieved by restricting the possible hedging positions to a subset of $\Qc_{i}$ at each date $t_i$.
\end{Remark}

Then the global discretized version of algorithm \ref{algoMeanVar3} is given on algorithm \ref{algoMeanVar4} where $H^{(j)}$ correspond to the $j$ th  Monte Carlo realization of the payoff. 
\begin{algorithm}[H]
\caption{\label{algoMeanVar4}Global backward resolution algorithm, optimal control  and optimal variance calculation}
\begin{algorithmic}[1]
\For{$\nu \in \Qc_{N-1}$}
\For{$j\in \left[1,M\right]$}
\State$ \bar R(t_N ,S_{t_N}^{(j)}, \nu)  =  H^{(j)}$
\EndFor
\EndFor
\For{$i = N, 2$}
\For{$ k \xi \in \Qc_{i-2}$}
\State $(\nu^{(j)}_{i-1}(k))_{j=1,M} =$ OptimalControl$( \bar R(t_i,.,.), k, S_{t_{i-1}}, S_{t_i})$,
\For{$j\in \left[1,M\right]$}
\State $  \begin{array}{ll}  \bar R(t_{i-1}, S_{t_{i-1}}^{(j)}, k\xi) = &  \bar R(t_{i}, S_{t_{i}}^{(j)}, \nu_{i-1}^{(j)}(k)) -   \\
& \nu_{i-1}^{(j)}(k)  \Delta S_{i-1}^{(j)}
+  \lambda |\nu_{i-1}^{(j)}(k)-k \xi | S_{t_{i-1}}^{(j)}
\end{array} $
\EndFor
\EndFor
\EndFor
\State $P=\infty$, 
\For{$k = -\lfloor \frac{\bar m}{\xi} \rfloor, \ldots, \lfloor \frac{\bar l}{\xi} \rfloor $}
\State $\begin{array}{ll} \tilde P = &  \displaystyle{\sum_{j=1}^{M}}  \big( \bar R(t_1, S_{t_1}^{(j)}, k \xi)  - k \xi \Delta S_{0}^{(j)}
+   \lambda |k|  \xi  S_0 -  x )^2  \end{array}$
\If{ $ \tilde P < P$}
\State $\nu_0 = k \xi$, $P =  \tilde P$
\EndIf
\EndFor
\State $Var =  \frac{1}{M}\sum_{j=1}^M \big( \bar R(t_1,S_{t_1}^{(j)}, \nu_0) - \nu_0 \Delta S_0^{(j)}+\lambda |\nu_0| S_0 -x\big)^2$
\end{algorithmic}
\end{algorithm}
\section{Numerical results}
\subsection{The uncertainty models and the problem to solve}
We suppose that an electricity retailer has to face uncertainty on the load he has to provide for his customers.
We suppose that this  load   $D( t)$ is stochastic and follows the dynamic:
\begin{flalign}
\label{loadCurve}
D(t) = \hat D(t) + (D(u)-\hat D(u)) e^{-a_D ( t-u)} + \int_u^{ t} \sigma_D e^{-a_D( t-s)} dW^D_s, 
\end{flalign}
where $a_D$ is a mean-reverting coefficient, $\sigma_D$ the volatility of the process, $(W^D_t)_{t\le T}$ is a Brownian on $(\Omega, \Fc, \P)$ and $\hat D(u)$  is the average load seen on the previous years at the given date $u$. The equation \eqref{loadCurve} only states that the load curve $D(t)$ oscillates around an average value $\hat D(t)$ due to economic activity and to sensitivity to temperature.\\
We suppose that the retailer wants to hedge his position for a given date $T$ and that the future model 
is given by a one factor HJM model under the real world probability 
\begin{flalign}
F(t,T) = & F(0,T) e^{- \sigma_E^2 \frac{e^{-2 a_E(T-t)}-e^{-2 a_E T}}{4a_E} + e^{-a_E (T-t)} \hat W^E_t},\nonumber \\
\hat W^E_t = & \sigma_E \int_{0}^{t} e^{-a_E(t-s)} dW^E_s,
\label{futDyn}
\end{flalign}
where $F(t,T)$ is the forward curve seen at date $t$ for delivery at date $T$, $a_E$ the mean reverting parameter for electricity, $\sigma_E$ the volatility of the model and $(W^E_t)_{t\le T}$ a Brownian on $(\Omega, \Fc, \P)$ correlated to $W^D$ with a correlation $\rho$. The correlation is a priori negative, indicating that a high open position is a signal of a high available production or a low consumption driving the prices down. 
\begin{Remark}
The fact that the future price is modeled as a martingale is linked to the fact the risk premium is difficult to estimate and of second order compared to the volatility of the asset. This leads to strategies giving the same expected gains.
\end{Remark}
The SDE associated to the model \eqref{futDyn} is
\begin{flalign*}
dF(t,T) = \sigma_E e^{-a_E (T-t)} F(t,t) dW^E_t.
\end{flalign*}
Using  equation \ref{futDyn}, we get that
\begin{flalign*}
\E\left[\frac{F(t_i,T)^2}{F(t_{i-1},T)^2} | \Fc_{t_{i-1}}\right] = e^{\sigma_E^2 \frac{e^{-2 a_E(T-t_i)}-e^{-2 a_E (T-t_{i-1})}}{2a_E}},
\end{flalign*}
so that assumption \ref{assump1} is satisfied.\\
Similarly
\begin{flalign*}
\E\left[\frac{F(t_{i-1},T)^2}{F(t_{i},T)^2} | \Fc_{t_{i-1}}\right] = e^{3 \sigma_E^2 \frac{e^{-2 a_E(T-t_i)}-e^{-2 a_E (T-t_{i-1})}}{2a_E}},
\end{flalign*}
so that assumption \ref{assump2} is satisfied.
At last assumption \ref{assump3} is satisfied taking
\begin{flalign*}
\delta  = \max_{i=1}^N e^{- \sigma_E^2 \frac{e^{-2 a_E(T-t_i)}-e^{-2 a_E (T-t_{i-1})}}{2a_E}} < 1.
\end{flalign*}
The payoff of such a contract is then $H := D(T)F(T,T)$.
When there is no liquidity constraints, so supposing that the portfolio re-balancing is continuous, with no fees and constraints on volume and using the mean variance criterion (see remark \ref{remarkMeanVar}), the value of the contract $V = (V(t,D(t),F(t,T)))_{t\le T}$ and the hedging policy $\nu = (\nu(t,D(t),F(t,T)))_{t\le T}$ solve:
\begin{flalign}
\label{eq:contHedging}
(V,\nu)=  &\argmin_{(\tilde V, \tilde \nu) \in \R \times \Lc^2(F)} \E\left[  (D(T) F(T,T) -   \int_t^T \tilde \nu_s dF(s,T) - \tilde V)^2 | \Fc_t \right],
\end{flalign}
where $\Lc^2(F)$ is the set of predictable process  $v$  satisfying
\begin{flalign*}
\E\left[ \int_0^T v_t^2 \sigma_E^2 e^{-2a_E (T-t)} F(t,t)^2 dt\right] < \infty
\end{flalign*}
The demonstration of the following lemma is given in the appendix.
\begin{lemma}  \label{lemHedge}
  The optimal hedging policy solution of the equation  \eqref{eq:contHedging} is given by:
  \begin{flalign}
\nu(t, D(t), F(t,T))  =& \hat D(T) + (D(t)-\hat D(t)) e^{-a_D (T-t)}   \nonumber\\ 
& + \rho \left[e^{(a_E-a_D)(T-t)} \frac{\sigma_D}{\sigma_E} + \sigma_E \sigma_D \frac{1- e^{-(a_E+a_D)(T-t)}}{a_E+a_D} \right],
\label{couvMod}
  \end{flalign}
  and the  optimal variance  residual $Var^{opt}$ is given by:
  \begin{flalign}
  \label{optVarRes}
  Var^{opt} =& \sigma_D^2 (1- \rho^2) F(t,T)^2 \int_0^T e^{-2 a_D (T-s)} e^{\sigma_E^2 \frac{ e^{-2 a_E (T-s)}-e^{-2 a_E T}}{2 a_E}} ds.
\end{flalign}  
\end{lemma}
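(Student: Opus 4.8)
The plan is to recognise that problem \eqref{eq:contHedging} is the \emph{martingale case} of mean-variance hedging: the forward price $F(\cdot,T)$ is a $\P$-martingale, since its dynamics $dF(s,T)=\sigma_E e^{-a_E(T-s)}F(s,T)\,dW^E_s$ carry no drift. In this situation the optimal initial value is the conditional expectation $M_t:=\E[H\mid\Fc_t]$ (the stochastic integral having zero conditional mean), and the optimal strategy is the integrand in the Galtchouk--Kunita--Watanabe projection of $M$ onto $F$, namely $\nu_s=\frac{d\langle M,F\rangle_s}{d\langle F\rangle_s}$, while the minimal residual equals the variance carried by the orthogonal martingale. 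Thus the whole proof reduces to computing $M_t$ in closed form and then carrying out this projection; admissibility of the resulting integrand in $\Lc^2(F)$ is routine and I would check it at the end.

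First I would compute $M_t=\E[D(T)F(T,T)\mid\Fc_t]$. Writing $D(T)=\mu_D+I_D$ with the $\Fc_t$-measurable part $\mu_D=\hat D(T)+(D(t)-\hat D(t))e^{-a_D(T-t)}$ and the centred Gaussian integral $I_D=\int_t^T\sigma_D e^{-a_D(T-s)}\,dW^D_s$, and $F(T,T)=F(t,T)\,\mathcal{E}_t$ with $\mathcal{E}_t=\exp\!\big(I_E-\tfrac12\langle I_E\rangle\big)$, $I_E=\int_t^T\sigma_E e^{-a_E(T-s)}\,dW^E_s$, the Dol\'eans exponential (so $\E[\mathcal{E}_t\mid\Fc_t]=1$), I would apply the Gaussian identity $\E[Xe^Y]=(\E X+\mathrm{Cov}(X,Y))\E[e^Y]$ for jointly Gaussian $(X,Y)$. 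This yields $M_t=F(t,T)\big(\mu_D+\mathrm{Cov}(I_D,I_E\mid\Fc_t)\big)$, and the conditional covariance follows from the It\^o isometry for the correlated pair $(W^D,W^E)$ as $\rho\sigma_D\sigma_E\int_t^T e^{-(a_D+a_E)(T-s)}\,ds=\rho\sigma_D\sigma_E\frac{1-e^{-(a_D+a_E)(T-t)}}{a_D+a_E}$.

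Next I would apply It\^o to $M_t=F(t,T)A_t$, where $A_t=\mu_D+\rho\sigma_D\sigma_E\frac{1-e^{-(a_D+a_E)(T-t)}}{a_D+a_E}$, and read off the martingale part $dM_t=p_t\,dW^E_t+q_t\,dW^D_t$, with $p_t=A_t\sigma_E e^{-a_E(T-t)}F(t,T)$ coming from $A_t\,dF$ and $q_t=\sigma_D e^{-a_D(T-t)}F(t,T)$ coming from the $dW^D$-part of $dD(t)$ in $F\,dA$. Since $dF(t,T)=r_t\,dW^E_t$ with $r_t=\sigma_E e^{-a_E(T-t)}F(t,T)$ has no $dW^D$ component, the projection gives $\nu_t=\frac{d\langle M,F\rangle_t}{d\langle F\rangle_t}=\frac{p_t+\rho q_t}{r_t}=A_t+\rho\frac{\sigma_D}{\sigma_E}e^{(a_E-a_D)(T-t)}$, which is exactly \eqref{couvMod}.

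Finally, for the variance, the orthogonal martingale obeys $dL_t=dM_t-\nu_t\,dF_t=-\rho q_t\,dW^E_t+q_t\,dW^D_t$, whose bracket is $d\langle L\rangle_t=q_t^2(1-\rho^2)\,dt=\sigma_D^2(1-\rho^2)e^{-2a_D(T-t)}F(t,T)^2\,dt$. Hence $Var^{opt}=\E\big[\int_t^T d\langle L\rangle_s\mid\Fc_t\big]$, and inserting the exponential-martingale second moment $\E[F(s,T)^2\mid\Fc_t]=F(t,T)^2\exp\!\big(\sigma_E^2\frac{e^{-2a_E(T-s)}-e^{-2a_E(T-t)}}{2a_E}\big)$ and evaluating at the initial time $t=0$ reproduces \eqref{optVarRes}. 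The main obstacle is purely bookkeeping: carrying the correlation $\rho$ consistently through the Gaussian conditional expectation, the GKW projection, and the orthogonal bracket (where the three $\rho$-terms must combine to $1-\rho^2$); everything else is standard It\^o calculus.
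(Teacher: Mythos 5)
Your proposal is correct and follows essentially the same route as the paper: both compute $\E[D(T)F(T,T)\mid\Fc_t]$ in closed form and extract the optimal hedge as the Galtchouk--Kunita--Watanabe integrand, yours via the bracket ratio $d\langle M,F\rangle/d\langle F\rangle$ with correlated Brownians, the paper via first orthogonalizing $W^D=\rho W^E+\sqrt{1-\rho^2}\,\hat W^D$ and collecting the $dF$ terms --- an equivalent bookkeeping choice. Your explicit computation of the residual bracket $q_t^2(1-\rho^2)\,dt$ and of $\E[F(s,T)^2]$ actually supplies the detail the paper dismisses as ``straightforward.''
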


Traders often hedge their risks only using the sensitivity of the option with respect to the underlying getting a non optimal
hedging policy $\tilde \nu$  for the mean variance criterion:
\begin{flalign}
\label{nonOptCouv}
\tilde \nu(t, D(t), F(t,T))  = & \frac{\partial V}{\partial F}(t,D(t),F(t,T)) \\
=& \hat D(T) + (D(t)-\hat D(t)) e^{-a_D (T-t)}  + \rho \sigma_E \sigma_D \frac{1- e^{-(a_E+a_D)(T-t)}}{a_E+a_D}, \nonumber 
\end{flalign}
leading to a residual variance $Var^{cur}$ independent of the correlation given by:
\begin{flalign*}
  Var^{cur} = \sigma_D^2  F(t,T)^2  \int_0^T e^{-2 a_D (T-s)} e^{\sigma_E^2 \frac{ e^{-2 a_E (T-s)}-e^{-2 a_E T}}{2 a_E}} ds.
\end{flalign*}

\subsection{The test case}
We suppose that an important  producer  wants to hedge  an average monthly open  position of $9$ GW  three months before the beginning of delivery.
We suppose that the monthly product available on the market follows the dynamic \eqref{futDyn} with the following parameters:
$F(0,T)=40$ euros per MWh, $a_E=1.75$ and  $\sigma_E=20\%$  in annual.
The open position follows the dynamic given by equation \eqref{loadCurve} with parameters $a_D=19.8$ in annual, and  $\sigma_D=6240$ MW\@.
We suppose that the correlation $\rho$ between  $W^D$ and $W^E$ is equal to $-0.2$. \\
We suppose that the transaction costs are null, so we are only interested in the effect of the frequency of the hedge and the depth of the future market. 
We test the hedging strategies in term of variance   corresponding to:
\begin{itemize}
    \item the optimal analytic  given by equation \eqref{couvMod},
    \item the classical  tangent delta  given by equation
    \eqref{nonOptCouv},
    \item the numerical optimal solution obtained by algorithm \ref{algoMeanVar2}, \ref{algoMeanVar3} using some local basis functions as explained in paragraph \ref{sec:reg} and implemented with the StOpt library (\citeA{gevret2016stochastic}).
\end{itemize}
\begin{Remark}
Here the value function is a function of $F$ and $D$ so that two-dimensional regressions have to be used in the algorithms leading to the need of the specification of the number of mesh used for the representation of $F$ and $D$.
\end{Remark}
We first study the effect of the hedging frequency  with a market with infinite depth, then we  explore the effect of the finite depth of the market for the two algorithms using a high number of simulations and a high number of function basis.
At last, we  study the effect of the different parameters for the convergence of the algorithms. Because algorithm \ref{algoMeanVar2} needs to store the residual and the function value, it is more memory consuming  and runs slightly slower than the algorithm  \ref{algoMeanVar3} (less than $10\%$ of slowdown).\\
In all the tests, the global stock of hedging product available to sell is set to $12GW$ implying that there are no real global constraints for the availability of the product.
This stock of energy is discretized with a step of $100MW$  (so using $121$ position values) and at each trading date, the energy bought or sell on the market is discretized with the same step of $100MW$. 

\subsection{Market with infinite depth}
In this section, we take the following parameters for the solver: $8 \times 8$ meshes for the two-dimensional regressions. In optimization, we take $400000$ trajectories to calculate the regressions.
On table \ref{VarInfDeepth}, we  test the influence of the hedging frequency. We give  the variance obtained  by taking the average of 10 runs of  algorithms \ref{algoMeanVar2} and  \ref{algoMeanVar3}. Using an estimation $\hat  \sigma$ of the standard deviation of the different runs, we also give the value $\frac{\sigma}{\sqrt{10}}$ as an estimation of the confidence in the result.
Using the control obtained in an optimization run, we also simulate the variance obtained  out of the sample  by the numerical optimal strategy, the classical tangent delta and the optimal analytic using $1e6$ particles in table \ref{VarInfDeepthSim}.
The optimal variance given by equation \eqref{optVarRes} for a continuous hedging is $7.8628e+14$. Without hedge, the variance of the portfolio is equal to $1.114e+15$.
We notice that the averaged value during the optimization part is very close to the value obtained using  out of the sample trajectories. Numerically with $1e6$ trajectories, we even get a variance slightly below the analytical one showing that the convergence of the estimated variance is slow and that a very high number of samples are necessary to get a very good estimate.\\
As shown in table \ref{VarInfDeepthSim},  by using a time step of 2 weeks (a number of hedging dates equal to 8), the  numerical hedging residual is very close to the one obtained using the optimal analytical hedge and both values are very close to the values obtained by continuous hedging.
Besides,  we see that  the numerical strategy works better than the optimal analytic strategy  only when  3 or 4 hedging dates are used.

\begin{table}[H]
\centering
 \begin{tabular}{|c|c|c|c|}  \hline
   Number of hedging dates     & 3    &  4    &  8    \\ \hline
   Averaged variance           & 7.953e+14     &  7.9129e+14     &  7.851e+14     \\ \hline
   $\frac{\sigma}{\sqrt{10}}$  & 5.449e+11     &  6.328e+11     &  2.315e+11    \\ \hline
  \end{tabular}
\caption{\label{VarInfDeepth}Variance  obtained for infinite market depth with algorithm \ref{algoMeanVar3}.}
\end{table}

\begin{table}[H]
\centering
 \begin{tabular}{|c|c|c|c|}  \hline
   Number of hedging dates     & 3    &  4    &  8    \\ \hline
   Numerical         &    7.952e+14       &  7.9106e+14  &   7.853e+14 \\ \hline
   Optimal analytic  &   8.0843e+14     &  7.99811e+14     &  7.852e+14     \\ \hline
   Classical hedge    &  8.1905e+14    &   8.1854e+14    & 8.157e+14      \\ \hline
  \end{tabular}
\caption{\label{VarInfDeepthSim}Out of the sample variance  observed  for the  different strategies  for infinite market depth with algorithm \ref{algoMeanVar3}.}
\end{table}
On table \ref{VarInfDeepthAlgo1} and \ref{VarInfDeepthSimAlgo1}, we give similar results obtained by the algorithm \ref{algoMeanVar2}.
Taking 3 hedging dates, we get the same results as with algorithm \ref{algoMeanVar3} (differences exist but appears on the fifth digit).
As we increase the number of hedging dates, some differences appear between the two algorithms but the results are always very similar.
\begin{table}[H]
\centering
 \begin{tabular}{|c|c|c|c|}  \hline
   Number of hedging dates     & 3    &  4    &  8    \\ \hline
   Averaged variance           & 7.953e+14     &  7.9166e+14     &   7.8664e+14   \\ \hline
   $\frac{\sigma}{\sqrt{10}}$  & 5.449e+11     &  6.31809e+11    &  2.313e+11  \\ \hline
  \end{tabular}
\caption{\label{VarInfDeepthAlgo1}Variance  obtained for infinite market depth with algorithm \ref{algoMeanVar2}.}
\end{table}

\begin{table}[H]
\centering
 \begin{tabular}{|c|c|c|c|}  \hline
   Number of hedging dates     & 3    &  4    &  8    \\ \hline
   Numerical         &    7.952e+14       &  7.9069e+14   &  7.8797e+14 \\ \hline
  \end{tabular}
\caption{\label{VarInfDeepthSimAlgo1}Out of the sample variance  observed for infinite market depth  for the optimal strategy with algorithm \ref{algoMeanVar2}.}
\end{table}

\subsection{Market with finite depth}
In this section, we suppose that at each hedging date the power of energy available is limited to $1200W$.
We keep the same parameters for the models and the resolution as in the previous section. The optimal analytic strategies and classical hedging strategies are obtained by clipping the analytical strategies
in order to  respect the finite depth constraints.\\
The results are given in tables  \ref{VarFinDeepth} and \ref{VarFinDeepthSim} for algorithm \ref{algoMeanVar3}. For 3 and 4 hedging dates, the among of energy to sell on the future market is too low and the optimal strategy is always to sell at the maximum.
For $8$ and $13$ hedging dates, the numerical strategies and the optimal analytic give the same variance.
\begin{table}[H]
\centering
 \begin{tabular}{|c|c|c|c|c|}  \hline
   Number of hedging dates     & 3    &  4    &  8  &   13 \\ \hline
   Averaged variance           & 9.82228e+14   &  9.51851e+14   & 8.84079e+14  & 8.3508e+14 \\ \hline
   $\frac{\sigma}{\sqrt{10}}$  & 6.19103e+11   &  6.36506e+11  & 2.62856e+11  &  4.34744e+11 \\ \hline
  \end{tabular}
\caption{\label{VarFinDeepth}Variance  obtained for finite market depth with algorithm \ref{algoMeanVar3}.}
\end{table}

\begin{table}[H]
\centering
 \begin{tabular}{|c|c|c|c|c|}  \hline
   Number of hedging dates     & 3     &  4    &  8  &  13  \\ \hline
   Numerical           &  9.81158e+14       &  9.49984e+14     &  8.82166e+14  &    8.36344e+14    \\ \hline
   Optimal analytic  &  9.81158e+14       &  9.49984e+14     &  8.84816e+14  &  8.3635e+14 \\ \hline
   Classical hedge    &  9.81158e+14      &  9.49984e+14     &  8.96696e+14  &  8.67346e+14 \\ \hline
  \end{tabular}
\caption{\label{VarFinDeepthSim}Out of the sample variance  observed  for finite market depth for the  different strategies with algorithm \ref{algoMeanVar3}.}
\end{table}

Taking a correlation of $-0.2$, we see no advantage of using the numerical approach instead of using the analytical solution.
In the table \ref{VarFinDeepthCorr}, keeping  the number a hedging dates equal to $8$, we give the variance obtained with algorithm \ref{algoMeanVar3} depending on the correlation and in table  \ref{VarFinDeepthSimCorr}, we give the results obtained  by the different strategies using out of the sample trajectories.  
\begin{table}[H]
\centering
 \begin{tabular}{|c|c|c|}  \hline
  Correlation     &  -0.4    &  -0.6    \\ \hline
   Averaged variance           & 7.84187e+14   & 6.44759e+14   \\ \hline
   $\frac{\sigma}{\sqrt{10}}$  & 2.65632e+11   & 2.61286e+11    \\ \hline
  \end{tabular}
\caption{\label{VarFinDeepthCorr}Variance  for different  correlations obtained for finite market depth with $8$ hedging dates with algorithm \ref{algoMeanVar3}.}
\end{table}
\begin{table}[H]
\centering
 \begin{tabular}{|c|c|c|}  \hline
   Correlation     & -0.4    &  -0.6   \\ \hline
   Numerical           & 7.84736e+14     & 6.45161e+14    \\ \hline
   Optimal analytic  &  7.97782e+14    &  6.9845e+14   \\ \hline
   Classical hedge   & 8.57933e+14    &  8.17449e+14    \\ \hline
  \end{tabular}
\caption{\label{VarFinDeepthSimCorr}Out of the sample variance  observed  for finite market depth with $8$ hedging dates  for the  different strategies with algorithm \ref{algoMeanVar2}.}
\end{table}
 Results obtained with the
algorithm \ref{algoMeanVar2} for different correlations  in the optimization part are given in table \ref{VarFinDeepthCorr1}. Once again, results obtained by the two algorithms are very close.
\begin{table}[H]
\centering
 \begin{tabular}{|c|c|c|}  \hline
  Correlation     &  -0.4    &  -0.6    \\ \hline
   Averaged variance           &  7.84423e+14   & 6.45014e+14  \\ \hline
   $\frac{\sigma}{\sqrt{10}}$  & 2.45677e+11   & 2.63838e+11   \\ \hline
  \end{tabular}
\caption{\label{VarFinDeepthCorr1}Variance  for different  correlations obtained for finite market depth with $8$ hedging dates with algorithm \ref{algoMeanVar2}.}
\end{table}
The results obtained clearly show that clipping the analytical continuous optimal hedging policy may be far from  optimal  especially for high correlations. Results are confirmed
on the figure \ref{AProfFaibleD265AverageDispersionVF2Cor02} where the different strategies are plot on the same trajectories for a correlation of $-0.4$. The numerical optimizer is able to anticipate that the position at maturity should still be open and  the position is kept longer than the one obtained  with  the optimal analytical strategy.

\begin{figure}[H]
\begin{minipage}[b]{0.49\linewidth}
  \centering
 \includegraphics[width=\textwidth]{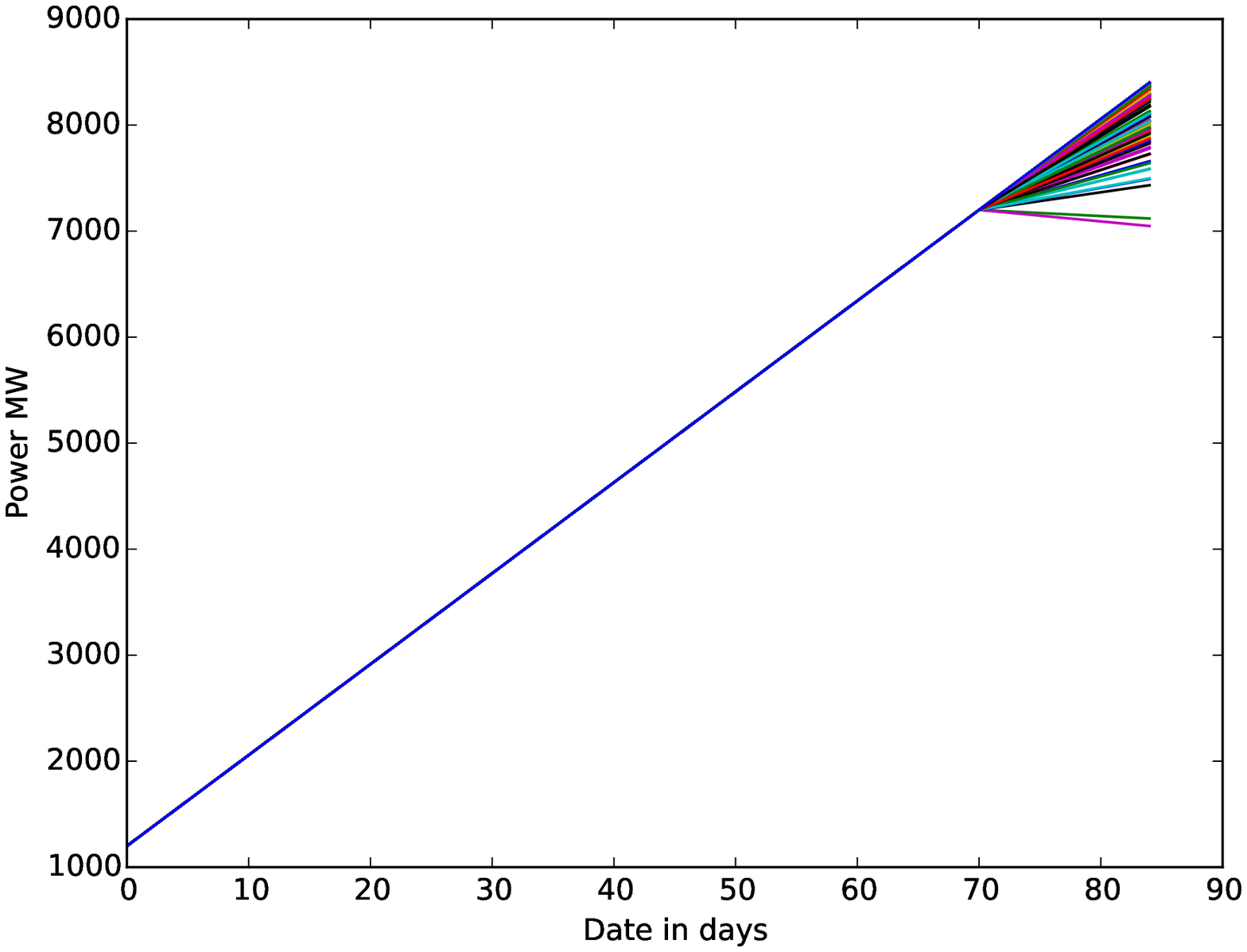}
 \caption*{Delta tangent.}
 \end{minipage}
 \begin{minipage}[b]{0.49\linewidth}
  \centering
 \includegraphics[width=\textwidth]{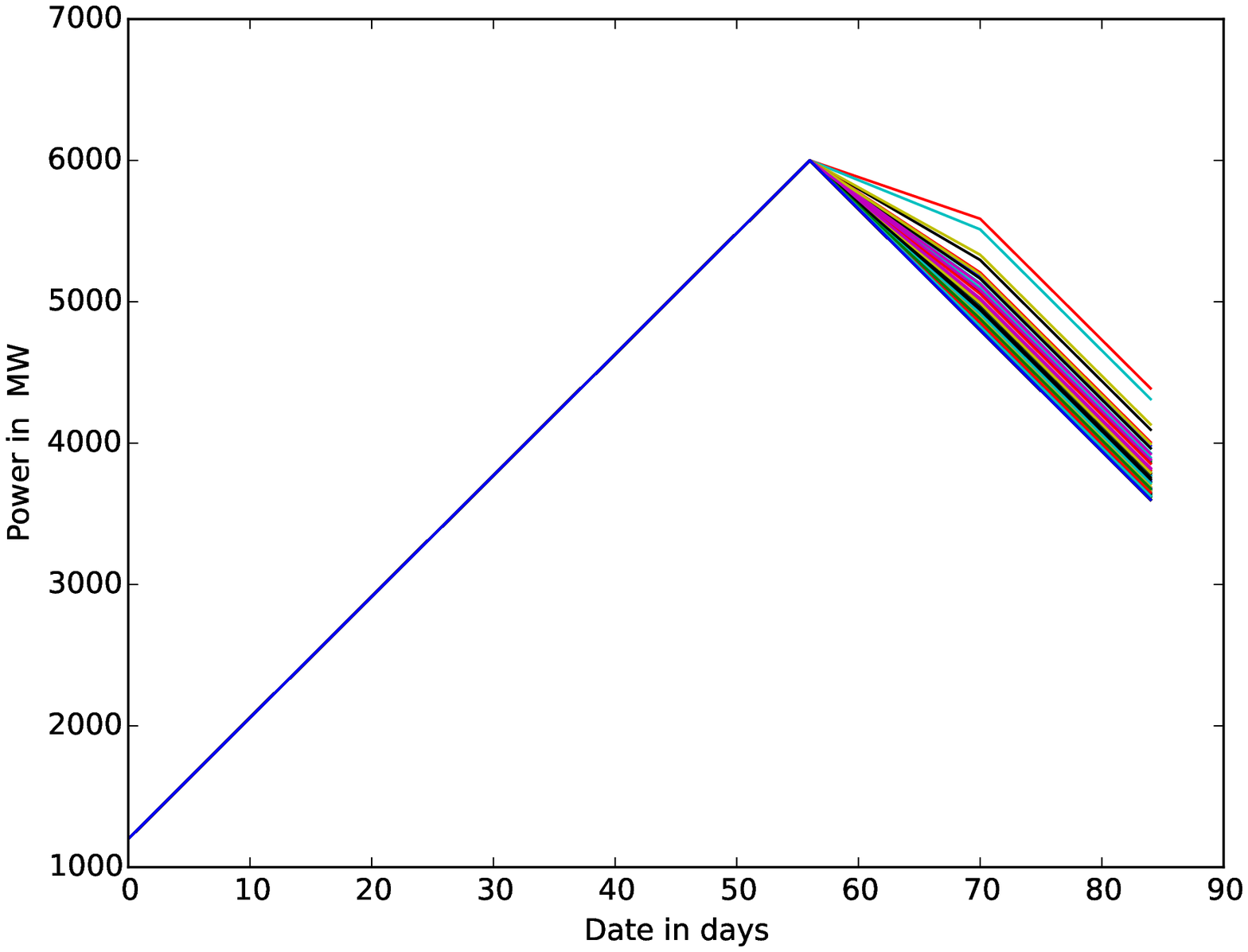}
 \caption*{Analytic optimal.}
 \end{minipage}
\begin{minipage}[b]{\linewidth}
  \centering
 \includegraphics[width=0.5\textwidth]{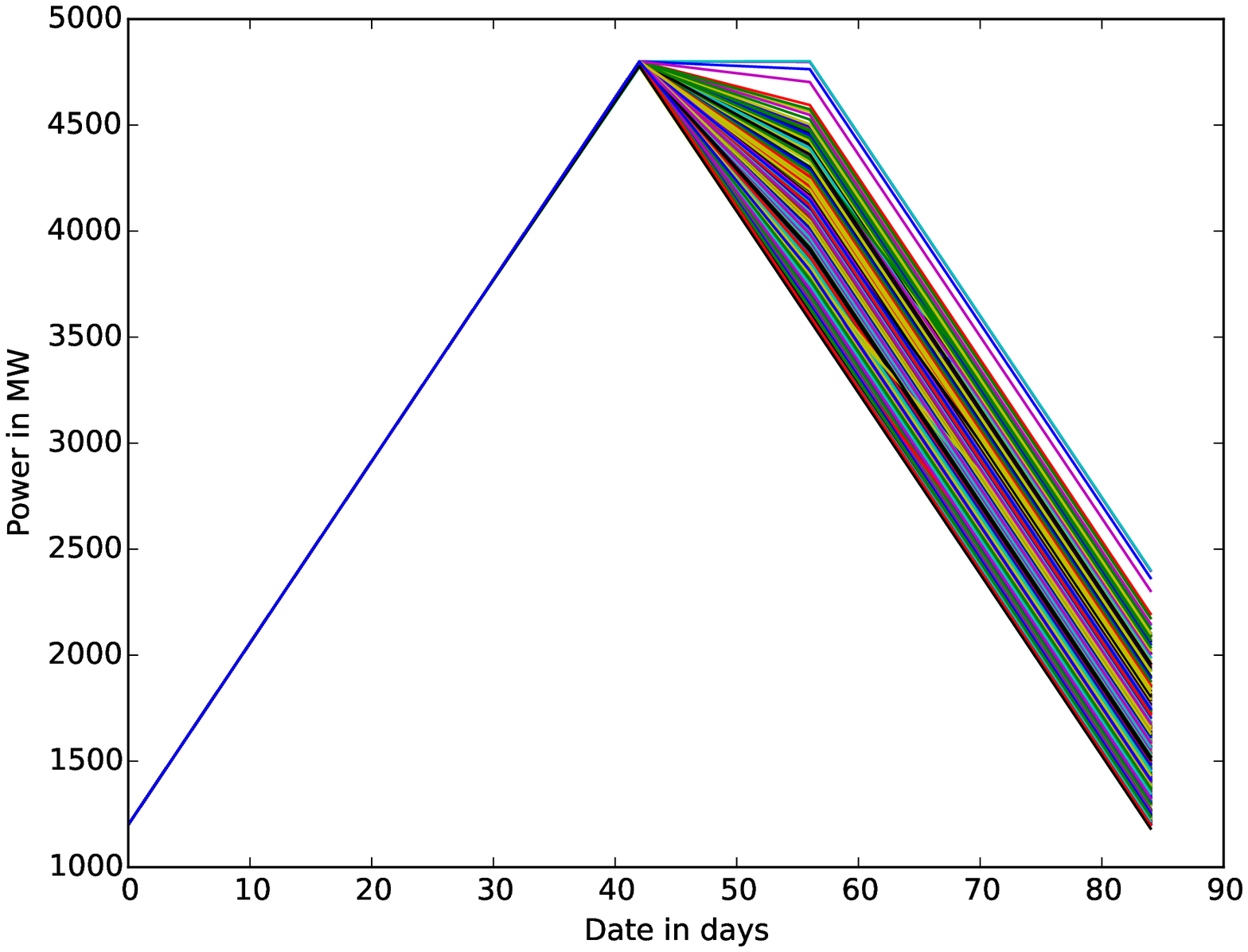}
 \caption*{Numerical.}
 \end{minipage}
\caption{\label{AProfFaibleD265AverageDispersionVF2Cor02} 
Hedging simulations in MW  for finite market depth with $8$ hedging dates and  a correlation of -0.4.}
\end{figure}

\section{Numerical convergence of the two algorithms}
In this section, we  study the influence on the results of the number of mesh and the number of trajectories taken in optimization.
To this end, we take the infinite market depth test case with a number of hedging dates equal to $8$. 
As shown in lemma \ref{lemHedge}, in the continuous case, the conditional variance is quadratic  in the future value and is  independent  of  the open position, while
the optimal control is only affine with the open position.\\
As a reference, we take the average of $100$ runs of each algorithm using $12 \times 12$ meshes and one million trajectories in optimization.
By construction, the local regression algorithm tries to give a partition such that the  number of trajectories belonging to a cell is constant. 
As a rule of the thumb,  the number of trajectories on each mesh is kept  roughly constant  (so here equal to $7000$) while changing the number of mesh used.
First, we take a number of mesh as $n \times n$ which is clearly not optimal: the number of mesh in the $F$ and the  $D$ direction should be taken different and dependent on the volatilities and mean reverting coefficients.
On the table \ref{meshImpact}, we give the average of $100$ runs obtained with algorithm \ref{algoMeanVar2} and \ref{algoMeanVar3} for different numbers of meshes.
We also plot the standard deviation of the calculation $\hat \sigma$ divided by $\sqrt{100}$ as an indicator of the error.
The bias due to a small number of mesh is rather small for both algorithms. Off course using a small number of mesh leads to a rather high variance
of the result obtained.
\begin{table}[H]
\centering
 \begin{tabular}{|c|c|c|c|c|c|}  \hline
   Number of meshes      & $1 \times 1$     &  $2 \times 2$  & $4 \times 4$  & $6 \times 6$    \\ \hline
   Average  algo. \ref{algoMeanVar2} &  7.894e+14   &  7.865e+14   &  7.876e+14    & 7.877e+14        \\ \hline
   $\frac{\hat \sigma}{\sqrt{100}}$   algo. \ref{algoMeanVar2} & 1.525e+12 & 6.812e+11  & 3.602e+11  & 2.436e+11  \\ \hline
   Average   algo. \ref{algoMeanVar3} & 7.869e+14  & 7.847e+14   & 7.861e+14   & 7.862e+14   \\ \hline
   $\frac{\hat \sigma}{\sqrt{100}}$  algo. \ref{algoMeanVar3}  & 1.524e+12 & 6.791e+11& 3.577e+11 & 2.445e+11 \\ \hline
 \end{tabular}
 \newline
 \bigskip
 \newline
  \begin{tabular}{|c|c|c|c|}  \hline
   Number of meshes      &   $8 \times 8$ & $10 \times 10$ & $12 \times 12$   \\ \hline
   Average  algo. \ref{algoMeanVar2} &  7.876e+14   & 7.875e+14  &  7.875e+14    \\ \hline
   $\frac{\hat \sigma}{\sqrt{100}}$   algo. \ref{algoMeanVar2} &  1.888e+11  & 1.21038e+11  & 1.227e+11   \\ \hline
   Average   algo. \ref{algoMeanVar3} &  7.860e+14 & 7.860e+14  &  7.860e+14\\ \hline
   $\frac{\hat \sigma}{\sqrt{100}}$  algo. \ref{algoMeanVar3}  & 1.906e+11 & 1.221e+11 & 1.242e+11\\ \hline
  \end{tabular}
\caption{\label{meshImpact} Convergence study of the two algorithms with the number of mesh keeping the same number of trajectories (roughly 7000) on each mesh.}
\end{table}
In table \ref{simImpact}, we choose to keep a number $8 \times 8$ mesh and increase the number of trajectories per mesh.
\begin{table}[H]
\centering
 \begin{tabular}{|c|c|c|c|c|c|}  \hline
   Number of trajectories     &  50000 &  100000 & 200000 & 440000    &  1000000 \\ \hline
   Average  algo. \ref{algoMeanVar2} &  7.8399e+14 & 7.8643e+14 & 7.8700e+14 &   7.8759e+14   & 7.87729e+14            \\ \hline
   $\frac{\hat \sigma}{\sqrt{100}}$   algo. \ref{algoMeanVar2} &  4.624e+11& 3.121e+11 & 2.621e+11 &1.888e+11   & 1.23053e+11   \\ \hline
   Average   algo. \ref{algoMeanVar3} &  7.7126e+14 &   7.8007e+14 &  7.8378e+14 &7.86048e+14 & 7.87052e+14  \\ \hline
   $\frac{\hat \sigma}{\sqrt{100}}$  algo. \ref{algoMeanVar3}  &  4.686e+11 & 3.141e+11 & 2.664e+11& 1.906e+11  & 1.23911e+11   \\ \hline
 \end{tabular}
 \caption{\label{simImpact} Using $8 \times 8$  meshes, convergence  study of the two algorithms increasing the number of particles.}
 \end{table}
Indeed, both algorithm give similar results  for a high number of particles per mesh. However, for a low number of particles per mesh, algorithm \ref{algoMeanVar2} presents a lower bias than algorithm  \ref{algoMeanVar3}.\\
At last we want to check that the control is well calculated. In the general case the optimal controls is in dimension three but in the continuous case the formula of lemma \ref{lemHedge}
gives a mono-dimensional representation of the optimal control.\\
The optimal control has been calculated with algorithm \ref{algoMeanVar3} and projected on the local function basis.
Then we can reconstruct the optimal position calculated  by the algorithm and compare the solution with  the formula in lemma  \ref{lemHedge}.
The figure \ref{control} shows that the  numerical optimal control is affine  and that even if the variance obtained with 8 hedging dates is very close to the continuous one, the  optimal control is still far away from the continuous one. Increasing the number of hedging dates permits to be closer to the continuous optimal control.

\begin{figure}[H]
\begin{minipage}[b]{0.49\linewidth}
  \centering
 \includegraphics[width=\textwidth]{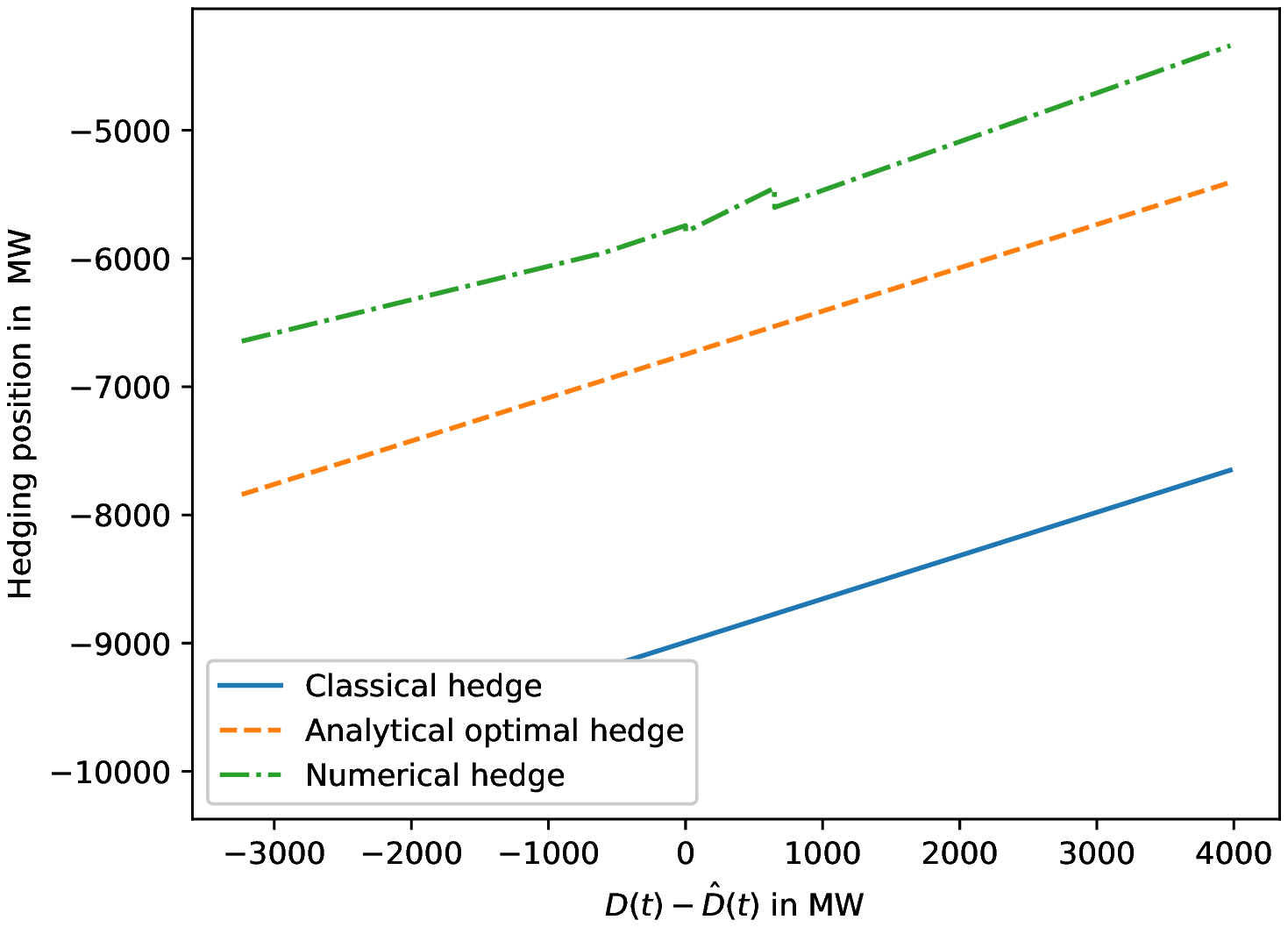}
 \caption*{8 hedging dates.}
 \end{minipage}
 \begin{minipage}[b]{0.49\linewidth}
  \centering
 \includegraphics[width=\textwidth]{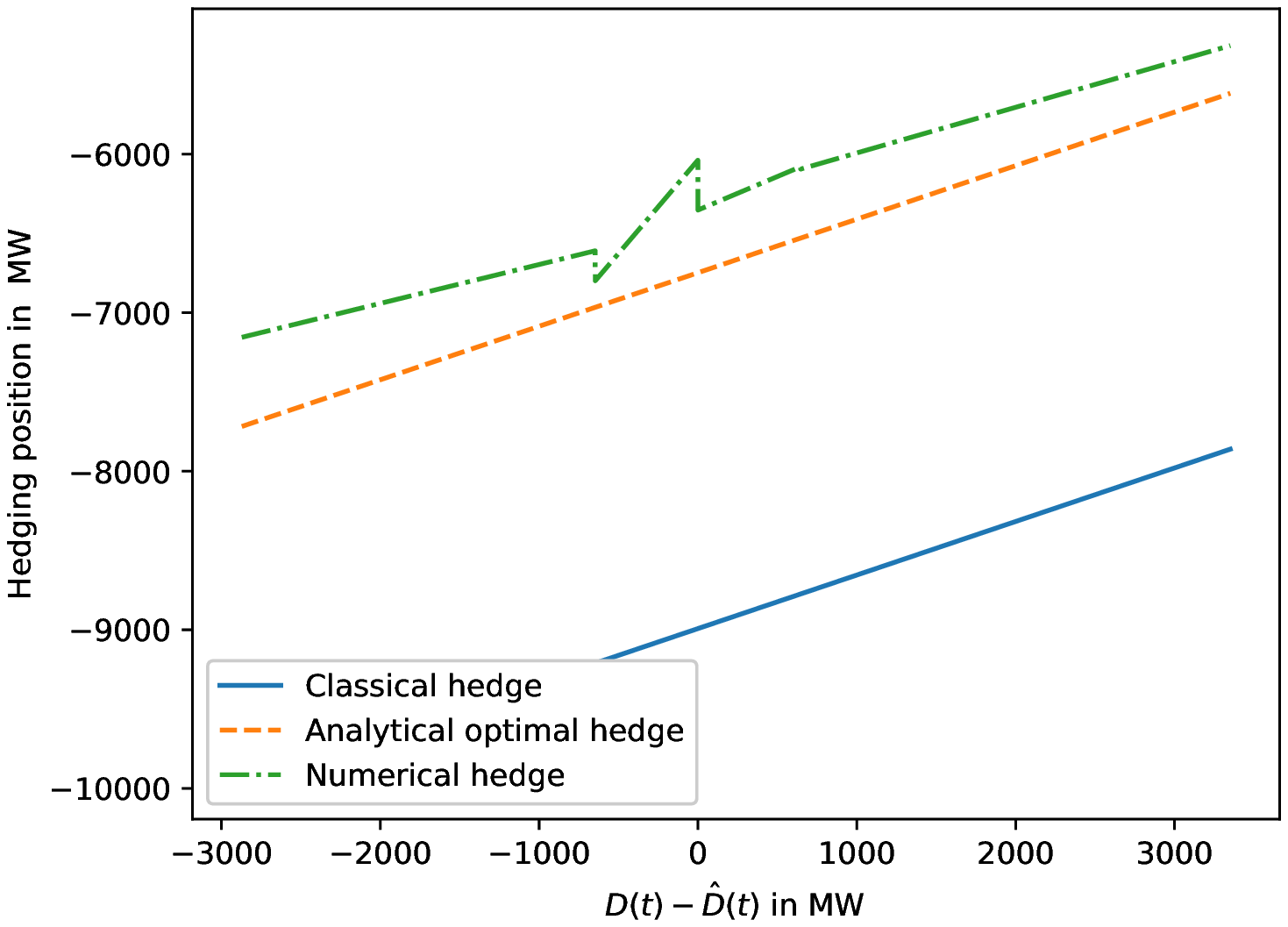}
 \caption*{14 hedging dates.}
 \end{minipage}
 \begin{minipage}[b]{\linewidth}
  \centering
 \includegraphics[width=0.5\textwidth]{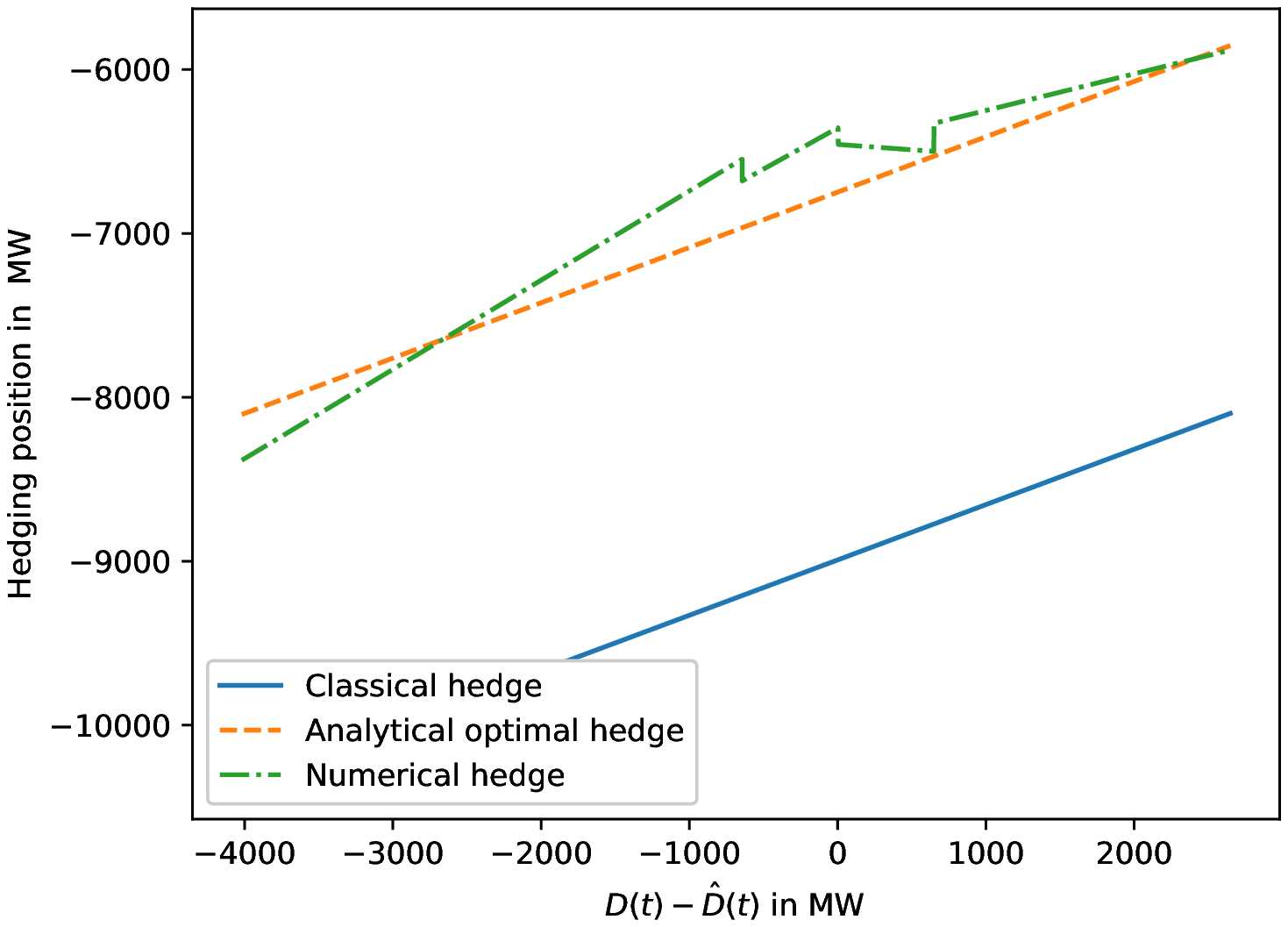}
 \caption*{26 hedging dates.}
 \end{minipage}
\caption{\label{control} 
Comparison of numerical optimal position, optimal analytic position, and classical hedge obtained 20 days before delivery for different number of hedging dates using $4$ basis functions for the open position.}
\end{figure}

\section{Conclusion}

 In the case of mean variance hedging of the wealth portfolio some effective algorithms has been developed to find the optimal strategy taking into account the transaction costs, the limited availability of the hedging products, the fact that hedging is only achieved at discrete dates.\\
We have shown on a realistic case in energy market that taking in account the reality of the market depth  has an important impact on the efficiency of the hedging strategy. \\
This algorithm could be extended to non symmetric risk measure to take into account the fact that managers wants to favor gains as in \citeA{gobet2018option}.

\section{Appendix}
We give the proof of lemma \ref{lemHedge}: \\

The solution of the  problem \eqref{couvMod} is known to be given in the case of martingale assets by the Galtchouk-Kunita-Wananabe decomposition and 
\begin{flalign*}
V(t,D(t),F(t,T)) = \E\left[ D(T)F(T,T)| \Fc_t\right],
\end{flalign*}
which can be calculated as follows:
\begin{flalign*}
V(t,D(t),F(t,T)) = & F(t,T) \left[\hat D(T) + (D(t)-\hat D(t)) e^{-a_D (T-t)} + \right. \nonumber \\
&  \left. \quad \quad \rho \sigma_E \sigma_D \frac{1- e^{-(a_E+a_D)(T-t)}}{a_E+a_D}\right].
\end{flalign*}
The value function $V$ being a martingale, using Ito lemma we get:
\begin{flalign*}
D(T) F(T,T) = & V(0,D(0), F(0,T)) + \int_{0}^T \frac{\partial V}{\partial D}(s,D(s),F(s,T)) \sigma_D d W^D_s  + \\
& \int_{0}^T \frac{\partial V}{\partial F}(s,D(s),F(s,T)) \sigma_E F(s,T) e^{-a_E (T-s)} dW^E_s, 
\end{flalign*}
so using  $W^D_t  = \rho W^E_t + \sqrt{1-\rho^2}  \hat W^D_t$ with $ \hat W^D_t$ orthogonal to $W^E_t$ in $\Lc^2$:
\begin{flalign}
D(T) F(t,T) = & V(0,D(0), F(0,T)) + \int_{0}^T  \left[ \frac{\partial V}{\partial F}(s,D(s),F(s,T)) + \nonumber \right. \\
&  \left. \quad \quad   \rho \frac{\sigma_D e^{a_E(T-s)}}{\sigma_E F(s,T)} \frac{\partial V}{\partial D}(s,D(s),F(s,T))\right]  d F(s,T) \nonumber \\
& +  \sqrt{1- \rho^2}  \frac{\partial V}{\partial D}(s,D(s),F(s,T)) \sigma_D d \hat W^D_s. 
\label{hedge}
\end{flalign}
The second part in the previous integral represents the non hedgable part of the asset and the optimal hedge is given by:
\begin{flalign}
  \nu(t, D(t), F(T,T))  = & \frac{\partial V}{\partial F}(t,D(t),F(t,T)) + \rho \frac{\sigma_D e^{a_E(T-s)}}{\sigma_E F(t,T)} \frac{\partial V}{\partial D}(t,D(t),F(t,T)).
  \label{hedgeOpt}
\end{flalign}
Introducing the forward tangent process
\begin{flalign*}
Y_t^T = e^{ -\frac{V(t,T)}{2} +  e^{-a_E (T-t)}  \hat W^E_t },
\end{flalign*}
we get 
\begin{flalign}
  \label{DV}
 \frac{\partial V}{\partial F}(t,D(t),F(t,T)) = & \E( D(T) Y^T_T |  \Fc_t )/Y_t^T, \nonumber \\ 
 \frac{\partial V}{\partial D}(t,D(t),F(t,T)) = & e^{-a_D (T-t)} F(t,T), 
\end{flalign}
so that
\begin{flalign}
  \label{DVDF}
\frac{\partial V}{\partial F}(t,D(t),F(t,T))& =  \hat D(T) + (D(t)-\hat D(t)) e^{-a_D (T-t)} + \nonumber \\
&  \rho \sigma_E \sigma_D \frac{1- e^{-(a_E+a_D)(T-t)}}{a_E+a_D}.
\end{flalign}
Plugging equation \eqref{DVDF} in equation \eqref{DV} and equation \eqref{DV} in equation \eqref{hedgeOpt} gives the optimal hedging strategy.
The residual risk is straightforward.

\section{Acknowledgements and Declarations of Interest}
{\bf Acknowledgements}:
This work has benefited from the financial support of the ANR Caesars.\\
{\bf Declarations of Interest}:
The author reports no conflicts of interest. The author alone is responsible for the content and writing of the paper.

\bibliographystyle{apacite}
\bibliography{jcf}

\end{document}